	\theoremstyle{plain}
	\newtheorem{theorem}{Theorem}
	\newtheorem{lemma}[theorem]{Lemma}       
	\newtheorem{proposition}[theorem]{Proposition}
	\theoremstyle{definition}
	\newtheorem{definition}{Definition}
	\newtheorem{remark}{Remark}
\newcommand\Eb{\mathds{E}}
\newcommand\Fb{\mathds{F}}
\newcommand\Pb{\mathds{P}}
\newcommand\Rb{\mathds{R}}
\newcommand\Nb{\mathds{N}}
\newcommand\Ac{\mathscr{A}}
\newcommand\Bc{\mathscr{B}}
\newcommand\Fc{\mathscr{F}}
\newcommand\Gc{\mathscr{G}}
\newcommand\Hc{\mathscr{H}}
\newcommand\Oc{\mathscr{O}}
\newcommand\Pc{\mathscr{P}}
\newcommand\Xc{\mathscr{X}}
\newcommand\Yc{\mathscr{Y}}
\newcommand\eps{\varepsilon}
\newcommand\sig{\sigma}
\newcommand\Lam{\Lambda}
\newcommand\gam{\gamma}
\newcommand\lam{\lambda}
\newcommand\Pbh{\widehat{\Pb}}
\newcommand\Ebh{\widehat{\Eb}}
\newcommand\Ach{\widehat{\Ac}}
\newcommand\Gch{\widehat{\Gc}}
\newcommand\Xch{\widehat{\Xc}}
\newcommand\Ych{\widehat{\Yc}}
\newcommand\Bh{\widehat{B}}
\newcommand\Gamh{\widehat{\Gamma}}
\newcommand\Pch{\widehat{\Pc}}
\newcommand\Pbt{\widetilde{\Pb}}
\newcommand\Act{\widetilde{\Ac}}
\newcommand\Gct{\widetilde{\Gc}}
\newcommand\Xct{\widetilde{\Xc}}
\newcommand\Yct{\widetilde{\Yc}}
\newcommand\Gamt{\widetilde{\Gamma}}
\newcommand\Pct{\widetilde{\Pc}}
\newcommand{\dd}{\mathrm{d}}
\newcommand{\ee}{\mathrm{e}}
\newcommand{\eqlnostar}[2]{\begin{align}\label{#1}#2\end{align}}
\newcommand{\eqstar}[1]{\begin{align*}#1\end{align*}}
\begin{document}

\title{The implied Sharpe ratio}

\author{
Ankush Agarwal
\thanks{Adam Smith Business School, University of Glasgow. \textbf{e-mail}: \url{ankush.agarwal@glasgow.ac.uk}}
\and
Matthew Lorig
\thanks{Department of Applied Mathematics, University of Washington. \textbf{e-mail}: \url{mlorig@uw.edu}}
}

\date{This version: \today}

\maketitle

\begin{abstract}
In an incomplete market, including liquidly-traded European options in an investment portfolio could potentially improve the expected terminal utility for a risk-averse investor. However, unlike the \textit{Sharpe ratio}, which provides a concise measure of the relative investment attractiveness of different underlying risky assets, there is no such measure available to help investors choose among the different European options.  We introduce a new concept -- the \textit{implied Sharpe ratio} -- which allows investors to make such a comparison in an incomplete financial market. Specifically, when comparing various European options, it is the option with the highest implied Sharpe ratio that, if included in an investor's portfolio, will improve his expected utility the most. Through the method of Taylor series expansion of the state-dependent coefficients in a nonlinear partial differential equation, we also establish the behaviour of the implied Sharpe ratio with respect to an investor's risk-aversion parameter. 
In a series of numerical studies, we compare the investment attractiveness of different European options by studying their implied Sharpe ratio.

\vspace{0.5pc}
\noindent \textbf{Keywords:} Sharpe ratio, PDE asymptotics, stochastic volatility, Heston, reciprocal Heston
\end{abstract}
\section{Introduction}
\label{sec:intro}
The \textit{Sharpe ratio}, defined as the ratio of a portfolio's expected return to its standard deviation, was introduced by \cite{sharpe1966mutual} as a concise measure of mutual fund performance. Several other works have since used the idea under different settings to measure the performance of investment portfolios composed of risky assets. \cite{jensen1969risk} considered the effect of differential risk on required returns of different risky assets. Using a capital asset pricing model, he showed that the investment portfolios can be compared by looking at the difference between the actual returns on a portfolio and the expected returns on that portfolio conditional on its level of systematic risk and actual returns of the market portfolio. Another approach towards portfolio selection was proposed by \cite{merton1969lifetime} in which he incorporated the risk-preference of the investor through the risk-aversion parameter of a utility function.  He considered an investor that trades risky assets in the market to maximise his expected terminal utility. Under the assumption that the log-asset returns are normally distributed with a constant Sharpe ratio, he showed that, assuming a constant relative risk-aversion utility function, the investor derives a higher expected utility from assets with higher Sharpe ratio. 

\cite{hodges1989optimal} used the idea of expected utility maximisation to create optimal replication portfolios for contingent claims in a market with transaction costs. Their work led to the idea of \textit{indifference pricing} of contingent claims in \textit{incomplete markets} (see \cite{carmona2008indifference} for a collection of the related works). The framework of \textit{indifference pricing} allows to identify an economically justifiable price of a contingent claim in an {incomplete market}. In such markets, it is not possible to perfectly replicate any contingent claim using a self-financing portfolio of the underlying risky assets and determine a unique price (see, for example, Section 10.3 \cite{delbaen2006mathematics} for a complete mathematical characterisation of incomplete markets). Typically, the Hamilton-Jacobi-Bellman (HJB) equation satisfied by the indifference price is solved numerically. Recently, in an incomplete market framework based on a class of stochastic volatility models, \cite{lorig-4} approximately solved the HJB equation for the indifference price of a European option. He developed approximation techniques to solve nonlinear partial differential equations with state-dependent coefficients using a Taylor series expansion method developed in a series of papers: \cite{pagliarani2012analytical}, \cite{lorig2015analytical}, \cite{lorig-pagliarani-pascucci-2}. The corresponding approximate implied volatilities of the indifference prices are also provided by \cite{lorig-4}.

As the underlying risky assets cannot be used to perfectly hedge a European option in an incomplete market, an investor looking to create an investment portfolio can benefit by including a European option in it. This raises an important question of how to measure the investment attractiveness of different European options available in the market. Similar to how the Sharpe ratio can be used to measure the performance of an underlying risky asset, in this paper we introduce the concept of an \textit{implied Sharpe ratio} which can be viewed as a measure of a European option's worth to an investor.  Typically, the value of a European option is quantified in terms of its implied volatility.  When comparing two options, it is the option with a higher implied volatility that is considered to be the more expensive option. However, an option's implied volatility does not capture its worth to an investor. Relative to not owning the option, buying or selling an option at a given implied volatility may either increase or decrease an investor's expected terminal utility. The implied Sharpe ratio, as defined in this work, aims to address the issue of measuring an option's worth to an investor.

In the investment literature, the behavioural preference of an investor is captured through the risk-aversion parameter of his utility function.  Many classical studies of investment under uncertainty derive comparative statics with respect to the risk-aversion parameter. For example, it is well established that a risk-averse investor is willing to pay a higher premium to insure himself against risk. \cite{jewitt1987risk} extended this classical comparative static of risk-aversion parameter in the presence of an extra source of uncertainty. \cite{eeckhoudt1995risk} studied the impact of cost and price changes on the inventory of a risk-averse newsboy who must decide to purchase newspapers in order to sell them later. Similarly in our work, we aim to express the implied Sharpe ratio in terms of the model parameters and an investor's risk-aversion parameter in order to help him choose between European options with different strikes and times-to-maturity. 

The rest of the paper is organised as follows: In Section \ref{sec:implied-sharpe}, we define the implied Sharpe ratio and prove its uniqueness and existence in a general market setting.  In Section \ref{sec:market}, we introduce a general local stochastic volatility model which is popular among the practitioners and derive the equation for the implied Sharpe ratio in terms of the model parameters and risk-aversion parameter. In Section \ref{sec:asymptotics}, we use a Taylor series expansion technique to develop semi-explicit approximations of the implied Sharpe ratio.  In Section \ref{sec:examples} we study the particular forms of implied Sharpe ratio using different examples of local stochastic volatility models and illustrate {its importance in identifying the most suitable European option for investment among the available choices.} Mathematical proofs and a collection of formulas for calculating the implied Sharpe ratio under different models are presented in Appendices \ref{sec:proofs}--\ref{sec:formulas Heston}.

\section{The implied Sharpe ratio}
\label{sec:implied-sharpe}
Let us consider a frictionless financial market consisting of one risk-less asset with constant price and one non-dividend paying risky asset represented by a real-valued semi-martingale $S$ defined on a filtered probability space $(\Omega, \Fc,(\Fc_t), \Pb)$ satisfying the usual conditions. Let $W = (W_t)_{0 \leq t \leq T}$ denote the wealth process of an investor who invests in $\pi_t$ number of shares of $S$ at time $t$ starting from initial wealth level $w \in \Rb$. Then, the wealth process $W$ satisfies
\eqlnostar{eq:wealth}{
W_t = w + \int^t_0 \pi_u \dd S_u,
}
where $\pi = (\pi_t)_{0 \leq t \leq T}$ is a real-valued predictable process such that the stochastic integral above is well defined. In addition, we assume that the investor also owns European-style contingent claims on the risky asset. Over a fixed time horizon $T> 0,$ the investor trades the risky asset in order to maximise his expected terminal utility. We suppose the investor's utility function $U: \Rb \to \Rb$ is strictly concave, strictly increasing and belongs to the class $C^{\infty}(\Rb)$ of infinitely differentiable functions. The value of the investment to investor is captured by a function defined as follows:
\begin{definition}
The \textit{value function} $V$ of an investor with wealth process \eqref{eq:wealth} and who owns $\nu$ European-style contingent claims, each with payoff function $\varphi(S_T),$ is defined as
\eqstar{
V(t,s,w,\nu) := \sup_{\pi \in \Pi} \Eb_{t,s,w} \Bigl[ U \bigl(W_T + \nu \varphi(S_T) \bigr) \Bigr], \quad 0 \leq t < T, s > 0, w, \nu \in \Rb,
}
where $\Pi$ is the set of {admissible strategies} given as 
\eqstar{
\Pi := \Bigl\{ \pi:   (\pi_t)_{0 \leq t \leq T} \text{ is predictable and } \Eb_{0,s,w}\int^T_0 \pi_t^2  \dd \langle S, S\rangle_t < \infty\Bigr\}.
}
\end{definition}
When $\nu < 0$ the investor is a seller of the contingent claim, and when $\nu > 0$ he is a buyer. If for a given European option payoff we can explicitly solve for $V,$ we can find out whether the investor will prefer to sell or buy that European option. In an analogy to the Black-Scholes price of a European option, we define the \textit{Merton value function} as the value function of an investor when we assume that the risky asset price follows a geometric Brownian motion.
\begin{definition}
Suppose the risky asset dynamics is given by the Black-Scholes model in which $\dd S_t = \mu \, S_t \dd t + \sig\, S_t \dd B_t$, and where, $\mu \in \Rb$ is the expected rate of return, $\sigma \in \Rb_+ $ is the volatility and $B = (B_t)_{t \geq 0}$ is a standard Brownian motion. The \textit{Merton value function} $V^{\text{M}}$ of an investor with wealth process \eqref{eq:wealth} is then defined as 
\eqstar{
V^{\text{M}}(t,w;\lam) := \underset{\pi \in \Pi} {\sup}\Eb_{t,s,w} [U(W_T)],
}
where $\lambda := \tfrac{\mu}{\sigma}$ is the Sharpe ratio.
\end{definition}
The Merton value function is an increasing function with respect to $\lambda.$ Thus, a risky asset with higher Sharpe ratio will lead to a higher Merton value function. Just as the implied volatility is used to link the market-observed European option price to the Black-Scholes model's European option price, we define the \textit{implied Sharpe ratio} to link an investor's value function in a general risky-asset price model to the Merton value function in the Black-Scholes model.  
\begin{definition}
\label{def:imp Sharpe ratio}
Suppose an investor with wealth process \eqref{eq:wealth} owns $\nu$ European-style contingent claims with unit price $p.$ Further suppose that $V(t,s,w-\nu p, \nu) \geq U(w)$ for all $t \in [0,T]$ and any $s > 0, w, \nu \in \Rb.$ Then, the {\it implied Sharpe ratio} is the {unique} positive solution $\Lam$ of the equation
\eqlnostar{eq:impSharpe}{
V^{\text{M}}(t,w;\Lam) = V(t,s,w - \nu p,\nu),
}
where $V^{\text{M}}$ is the Merton value function and $V$ is the investor's value function. 
\end{definition}
According to the above definition, if an investor assumes that the risky asset price follows a geometric Brownian motion with Sharpe ratio $\Lambda$  and invests only in that risky asset, he will obtain the same expected utility as he would by investing in a portfolio composed of the risky asset and a European option, in a general risky-asset price model. European calls and puts are usually compared based on their implied volatilities. When plugged in the Black-Scholes model's European call/put formula, implied volatility provides the market price of the option. However, it does not provide any measure of the option's worth to an investor who is looking to maximize his expected terminal utility. The implied Sharpe ratio rectifies this shortcoming of implied volatility. 

Through the implied Sharpe ratio, we can connect an investor's value function $V$ in an incomplete market to the classical Merton value function $V^{\text{M}}.$ In doing so, our aim is two-fold: First, by computing the implied Sharpe ratio for $\nu > 0$ (buying the option) or $\nu < 0$ (selling the option), we can compare it to the case $\nu = 0$ (no position in the option). From Definition \ref{def:imp Sharpe ratio}, it is clear that a higher implied Sharpe ratio, compared to the other possibilities, will deliver a higher value to an investor who is looking to maximise his expected terminal utility over a finite time horizon. Thus, the implied Sharpe ratio can tell us whether buying ($\nu > 0$) or selling ($\nu < 0$) a contingent claim can improve the investor's utility as opposed to just investing in the risky asset ($\nu =0$) underlying the option. This comparison can also be made directly by comparing the value functions corresponding to the different possibilities ($\nu <0, \nu >0,\nu =0$). However, as we will see later, in the case of an \textit{exponential utility function}, the implied Sharpe ratio is independent of the starting wealth level. Thus, it provides a standard measure of investment worth as opposed to the value function which is dependent on the starting level of wealth.

Second, by computing the implied Sharpe ratio for two European options with different strikes and maturities, we can compare their relative worth to an investor. The European option with the higher implied Sharpe ratio will be more attractive to the investor. Such a comparison can be made across the whole range of available European options, thus making the implied Sharpe ratio a better measure than the implied volatility to assess the worth of an option for investment. {The implied volatility is a unitless quantity which provides a better comparison between different European options than their respective prices. In the same spirit, the implied Sharpe ratio allows a more intuitive comparison between the different European options than their respective expected utility values which is dependent on initial wealth.} As mentioned earlier, in the case of an \textit{exponential utility function}, the implied Sharpe ratio will be independent of the starting wealth level, thus making it a better choice than directly observing the value function, which depends on the starting wealth. Investors with different wealth levels can just use the implied Sharpe ratio to make informed investment decisions instead of using the value function. Next, we establish the theoretical result which ensures the existence and uniqueness of the implied Sharpe ratio. 

\begin{theorem}
The implied Sharpe ratio defined in \eqref{eq:impSharpe} exists and is unique.
\end{theorem}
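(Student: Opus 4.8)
Fix $t\in[0,T)$ and $w\in\Rb$. The plan is to show that $\lambda\mapsto V^{\text{M}}(t,w;\lambda)$ is a continuous, strictly increasing bijection from $[0,\infty)$ onto $[U(w),\bar U)$, where $\bar U:=\sup_{x\in\Rb}U(x)\in(U(w),+\infty]$, and that $V(t,s,w-\nu p,\nu)$ lies in this interval; the intermediate value theorem together with strict monotonicity then yields existence and uniqueness of $\Lam$ in \eqref{eq:impSharpe}.

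First I would identify the endpoints. For $\lambda=0$ we have $\mu=0$, so $S$ is a martingale and, for any admissible $\pi$, $W_T=w+\int_t^T\pi_u\,\dd S_u$ is a true martingale with $\Eb_{t}[W_T]=w$; Jensen's inequality gives $\Eb_{t}[U(W_T)]\le U(w)$, with equality at $\pi\equiv0$, hence $V^{\text{M}}(t,w;0)=U(w)$. For $\lambda\to\infty$, a constant dollar exposure $\theta_u\equiv c>0$ produces a Gaussian terminal wealth with mean $w+c\lambda(T-t)\to+\infty$, and monotone/dominated convergence yields $V^{\text{M}}(t,w;\lambda)\uparrow\bar U$. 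On the other side, the hypothesis of Definition \ref{def:imp Sharpe ratio} gives $V(t,s,w-\nu p,\nu)\ge U(w)$, while trivially $V(t,s,w-\nu p,\nu)=\sup_{\pi}\Eb[U(W_T+\nu\varphi(S_T))]\le\bar U$; the latter bound is strict, because a strictly increasing $U$ that is bounded above never attains $\bar U$, so $\Eb[U(X)]<\bar U$ for every a.s.\ finite $X$, whereas if $U$ is unbounded above then $\bar U=+\infty$ and $V(t,s,w-\nu p,\nu)<\infty$ by the standing well-posedness of the portfolio problem. Thus $V(t,s,w-\nu p,\nu)\in[U(w),\bar U)$.

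The crux is continuity and strict monotonicity of $\lambda\mapsto V^{\text{M}}(t,w;\lambda)$. In the Black--Scholes model the market is complete with a unique equivalent martingale measure, whose density over $[t,T]$ is the mean-one lognormal variable $Z^{\lambda}:=\exp\!\bigl(-\lambda(B_T-B_t)-\tfrac12\lambda^2(T-t)\bigr)$, and the classical convex-duality formula reads $V^{\text{M}}(t,w;\lambda)=\inf_{y>0}\bigl\{\Eb_{t}[\tilde U(yZ^{\lambda})]+yw\bigr\}$ with $\tilde U(z):=\sup_{x}(U(x)-zx)$ convex. Since lognormal laws with a common mean are increasing in the convex order in their variance parameter, $Z^{\lambda_2}$ dominates $Z^{\lambda_1}$ in the convex order whenever $0\le\lambda_1<\lambda_2$; convexity of $\tilde U$ then makes $y\mapsto\Eb_{t}[\tilde U(yZ^{\lambda})]+yw$ pointwise nondecreasing in $\lambda$, hence so is its infimum, giving monotonicity. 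Strict convexity of $\tilde U$ on the range of $U'$, together with the strictness of the convex order for $0\le\lambda_1<\lambda_2$ (and the elementary observation that a small nonzero position strictly beats $\pi\equiv0$ once $\lambda>0$), upgrades this to strict monotonicity on $[0,\infty)$. Continuity holds because $V^{\text{M}}(t,w;\cdot)$ is at once an infimum of the continuous maps $y\mapsto\Eb_{t}[\tilde U(yZ^{\lambda})]+yw$ and a supremum over admissible $\theta$ of the continuous maps $\lambda\mapsto\Eb_{t}\bigl[U\bigl(w+\int_t^T\theta_u(\lambda\,\dd u+\dd B_u)\bigr)\bigr]$, so it is both upper and lower semicontinuous; for the CARA and CRRA utilities one may instead read these properties directly off the closed-form expression for $V^{\text{M}}$.

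Putting the pieces together, $\lambda\mapsto V^{\text{M}}(t,w;\lambda)$ is a strictly increasing continuous bijection of $[0,\infty)$ onto $[U(w),\bar U)$, and $V(t,s,w-\nu p,\nu)$ is an element of that interval, so \eqref{eq:impSharpe} has exactly one solution $\Lam\ge0$; moreover $\Lam>0$ precisely when holding the claim strictly raises the investor's value, i.e.\ when $V(t,s,w-\nu p,\nu)>U(w)$. I expect the main obstacle to be the crux step above: making the strict monotonicity and continuity of $V^{\text{M}}$ in $\lambda$ rigorous at this level of generality (where the convex-order/duality argument, or else the explicit solution, is needed), together with the endpoint bookkeeping ensuring $V(t,s,w-\nu p,\nu)$ never coincides with the missing right endpoint $\bar U$.
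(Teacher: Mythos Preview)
Your argument is correct and in several respects more complete than the paper's, but it proceeds along a genuinely different route. The paper establishes strict monotonicity of $\lambda\mapsto V^{\text{M}}(t,w;\lambda)$ by a PDE comparison argument: writing the Merton HJB equation \eqref{eq:value function}, one checks that $V^{\text{M}}(\cdot\,;\lambda')$ is a strict subsolution of the equation satisfied by $V^{\text{M}}(\cdot\,;\lambda)$ whenever $\lambda>\lambda'$, and a viscosity/classical comparison principle then gives $V^{\text{M}}(t,w;\lambda)>V^{\text{M}}(t,w;\lambda')$. The paper then simply observes $V^{\text{M}}(t,w;\lambda)\ge U(w)$ and declares existence and uniqueness. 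You instead use the dual representation $V^{\text{M}}=\inf_{y>0}\{\Eb[\tilde U(yZ^{\lambda})]+yw\}$ together with the convex ordering of mean-one lognormals to obtain monotonicity, and you add continuity and the endpoint analysis $V^{\text{M}}(t,w;0)=U(w)$, $V^{\text{M}}(t,w;\lambda)\uparrow\bar U$ so that the intermediate value theorem applies cleanly.

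What each approach buys: the paper's PDE argument is short and self-contained once one accepts the HJB characterisation and a comparison principle, but it leaves continuity in $\lambda$ and surjectivity onto $[U(w),\bar U)$ implicit. Your duality/convex-order argument is heavier in machinery but makes those points explicit, and it does not rely on regularity of $V^{\text{M}}$ as a classical PDE solution. Both proofs share the same minor lacuna at the boundary: the definition asks for a \emph{positive} $\Lambda$, yet the standing hypothesis is only $V(t,s,w-\nu p,\nu)\ge U(w)$, so when equality holds one gets $\Lambda=0$; you flag this correctly in your last sentence, whereas the paper does not comment on it.
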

\begin{proof}
The function $V^{\text{M}}(t,w;\lambda)$ is the solution of the following partial differential equation:
\eqlnostar{eq:value function}{
\partial_t V - \frac{1}{2} \lam^2 \frac{(\partial_w V)^2}{\partial^2_wV} = 0, \qquad V(T,w) = U(w).
}
It is also strictly increasing and strictly concave with respect to the state variable $w$. Thus, we have $\partial^2_w V^{\text{M}}< 0.$ Next, suppose $\lam> \lam^\prime,$ and let $V^{\text{M}}(t,w;\lam)$ and $V^{\text{M}}(t,w;\lam^\prime)$ denote the solutions to Equation \eqref{eq:value function} for $\lambda$ and $\lambda^\prime,$ respectively. Then, we get
\eqstar{
&\partial_t V^{\text{M}}(t,w;\lam^\prime) - \frac{1}{2} \lam^2 \left(\frac{\partial_w V^{\text{M}}}{\partial^2_w V^{\text{M}}}\right)^2 (t,w;\lam^\prime) \partial^2_w V^{\text{M}} (t,w;\lam^\prime)\\
&= \partial_t V^{\text{M}}(t,w;\lam^\prime) - \frac{1}{2} \lam^\prime \left(\frac{\partial_w V^{\text{M}}}{\partial^2_w V^{\text{M}}}\right)^2 (t,w;\lam^\prime) \partial^2_w V^{\text{M}} (t,w;\lam^\prime)\\
&+ \frac{1}{2}\bigl({\lam^\prime}^2 - \lam^2 \bigr) \left(\frac{\partial_w V^{\text{M}}}{\partial^2_wV^{\text{M}}}\right)^2 (t,w;\lam^\prime) \partial^2_w V^{\text{M}} (t,w;\lam^\prime) > 0.
}
Therefore, $V^{\text{M}}(t,w;\lam^\prime)$ is a subsolution to the equation solved by $V^{\text{M}}(t,w;\lam)$ with the terminal condition which is independent of $\lam$ and $V^{\text{M}}(T,w;\lam^\prime) = V^{\text{M}}(T,w;\lam).$ Thus, we have that for $\lam > \lam^\prime$ and $t < T,$ $V^{\text{M}}(t,w;\lam) > V^{\text{M}}(t,w;\lam^\prime).$ In other words, $V^{\text{M}}$ is strictly increasing in $\lam.$ Moreover, for any $\lam, V^{\text{M}}(t,w;\lam) \geq U(w)$ which guarantees the existence and uniqueness of the solution to \eqref{eq:impSharpe}.  
\end{proof}

In the next section, we study the implied Sharpe ratio in a class of incomplete market models that are Markov. In the absence of closed-form formulas of the implied Sharpe ratio, we use a Taylor series expansion technique for nolinear PDEs developed by \cite{lorig-4} to find semi-explicit approximation in terms of the risk-aversion parameter and other model parameters.

\section{Markov market setting}
\label{sec:market}
To study the behaviour of the implied Sharpe ratio with respect to market parameters, we specialize to the setting of a Markov market model. We suppose that the dynamics of $S$ are of the following form
\eqlnostar{eq:main sde1}{
S_t 	&=\exp \bigl( X_t \bigr), \nonumber\\
\dd  X_t 	&=	\Bigl( \mu(X_t,Y_t) - \tfrac{1}{2} \sigma^2(X_t,Y_t) \Bigr)  \dd t + \sigma(X_t,Y_t)  \dd B_t^X ,\\
\label{eq:main sde2}
\dd Y_t &=	c(X_t,Y_t) \dd t + \beta(X_t,Y_t) \Bigl( \rho  \dd B_t^X +   \sqrt{1 - \rho^2} \dd B_t^Y \Bigr) , 
}
where $B^X = (B_t^X)_{0 \leq t \leq T}$ and $B^Y=(B_t^Y)_{0 \leq t \leq T}$ are independent Brownian motions. We assume that the system of stochastic differential equations \eqref{eq:main sde1}-\eqref{eq:main sde2} admits a unique strong solution $(X,Y)$ adapted to filtration $\Fb = (\Fc_t)_{0 \leq t \leq T}.$ Next, for $\pi_t$ units of currency invested in $S$ at time $t,$ the investor's wealth process $W$ satisfies the following equation
\eqstar{
\dd W_t = \frac{\pi_t}{S_t} \dd S_t = \pi_t \mu(X_t,Y_t) \dd t + \pi_t \sigma(X_t,Y_t) \dd B^X_t.
}
The value function of an investor with  $\nu$ European-style options and initial wealth level $w$ is given as 
\eqlnostar{eq:value function def}{
V(t,x,y,w,\nu) = \underset{\pi \in \Pi} {\sup}\Eb_{t,x,y,w} [ U (W_T + \nu\varphi(X_T))].
}
Each European option with payoff function $\varphi$ and maturity $T$ is assumed to have price $p$, which is obtained by computing the expectation of the option's payoff under the market's chosen pricing measure.
We adopt the primal approach to solve for $V$ and assume that the value function belongs to $C^{1,2,2,2}([0,T] \times \Rb^3).$ Under this assumption, by following the usual dynamic programming principle (see, for example, Chapter 3 \cite{pham2009continuous}), $V$ satisfies the Hamilton-Jacobi-Bellman (HJB) equation 
\eqlnostar{eq:HJB main value func}{
0= (\partial_t + \Ac) V + \underset{\pi \in R}{\max} \, \Ac^\pi V,&  &V(T,x,y,w,\nu) = U(w + \nu \varphi(x)) ,
}
where the operators $\Ac$ and $\Ac^\pi$ are given as 
\eqstar{
\Ac &=\bigl (\mu(x,y) - \frac{1}{2}\sigma^2(x,y)\bigr) \partial_x +  \frac{1}{2}\sigma^2(x,y) \partial^2_x  + c(x,y)  \partial_y +  \frac{1}{2}\beta^2(x,y)\partial^2_y + \rho \sigma(x,y) \beta(x,y) \partial_x \partial_y,\\
\Ac^\pi &= \pi(t,x,y,w) \mu(x,y) \partial_w + \frac{1}{2}\pi^2(t,x,y,w)\sigma^2(x,y)\partial^2_w +  \pi(t,x,y,w) \rho \sigma(x,y) \beta(x,y) \partial_y \partial_w\\
& \quad +  \pi(t,x,y,w) \sigma^2(x,y) \partial_x \partial_w.
}
The candidate optimal strategy $\pi^\star$ is obtained by maximizing $\Ac^\pi V$ which gives us
\eqlnostar{eq:opt strat}{
\pi^\star = \underset{\pi \in \Rb}{\arg\max} \, \Ac^\pi V = - \left(\frac{\mu (\partial_w V) + \rho \beta \sigma (\partial_y \partial_w V) +\sigma^2 (\partial_x \partial_w V)}{\sigma^2 \partial^2_w V} \right).
}
In the above formula, we have suppressed the arguments $(t,x,y,w,\nu)$ for simplicity of notation. We will do so from now on, wherever it causes no confusion, in order to keep the notation manageable. 

Inserting the optimal strategy $\pi^\star$ in HJB equation \eqref{eq:HJB main value func} yields the following:
\eqstar{
0 = (\partial_t + \Ac) V + \Hc(V),& &V(T,x,y,w,\nu) = U(w  + \nu \varphi(x)),
}
where the \textit{Hamiltonian} $\Hc(V)$ is a nonlinear term given as 
\eqlnostar{eq:value func HJB}{
\Hc(V) = &-\frac{1}{2}\lam^2 \frac{(\partial_w V)^2}{\partial^2_w V} - \rho \beta \lam \frac{(\partial_w V)(\partial_y \partial_w V)}{\partial^2_w V} - \frac{1}{2}\rho^2 \beta^2  \frac{(\partial_y \partial_w V)^2}{\partial^2_w V}\\
&-\mu \frac{(\partial_w V)(\partial_x \partial_w V)}{\partial^2_w V} - \rho \sigma \beta \frac{(\partial_x \partial_w V)(\partial_y \partial_w V)}{\partial^2_w V} - \frac{1}{2}\sig^2  \frac{(\partial_x \partial_w V)^2}{\partial^2_w V}, \quad \lam := \frac{\mu}{\sig}.
}
Note that $\lambda$ defined above is the \textit{instantaneous Sharpe ratio} of the risky asset $S$. In the rest of the study, we fix the utility function $U$ to be of the exponential form:
\eqstar{
U(x) = -\frac{1}{\gamma} \ee^{-\gamma x}, \quad \gamma >0,
}
where $\gamma$ is the risk-aversion parameter. This choice of utility function allows us to obtain a form of the implied Sharpe ratio which is independent of the initial wealth level. The Merton value function in this case is given as
\eqstar{
V^{\text{M}}(t,w;\lam) = -\frac{1}{\gam} \exp \left( -\gam w - (T -t) \tfrac{1}{2}\lam^2\right).
}
One can verify by direct substitution that the above expression satisfies \eqref{eq:value function}.
Inspired by the form of the Merton value function for exponential utility, we make the following ansatz to solve the HJB equation \eqref{eq:value func HJB}:
\eqlnostar{eq:ansatz1}{
V(t,x,y,w,\nu) = -\frac{1}{\gam} \exp\left( - \gamma w + \psi(t,x,y,\nu)\right).
}
We find that the function $\psi$ satisfies the following equation
\eqlnostar{eq:main psi}{
0 = \left( \partial_t + \Act \right) \psi+ \Bc(\psi),& &\psi(T,x,y,\nu)= -\gamma \nu \varphi(x),
}
where the linear operator $\Act$ and the nonlinear operator $\Bc$ are given by 
\eqstar{
\Act = \tfrac{1}{2}\sig^2 (\partial^2_x - \partial_x) + (c - \rho \beta \lam) \partial_y + \tfrac{1}{2} \beta^2 \partial^2_y + \rho \sig \beta \partial_x \partial_y, & &\Bc(\psi) = (1-\rho^2) (\tfrac{1}{2}\beta^2) (\partial_y \psi)^2 - \tfrac{1}{2} \lam^2.
}
To find a formula for the implied Sharpe ratio $\Lam,$ we solve the following equation
\eqlnostar{eq:Merton modified value}{
V^{\text{M}}(t,w;\Lam) = V(t,x,y,w-\nu p,\nu).
}
It can be checked that \eqref{eq:Merton modified value} is satisfied if and only if 
\eqlnostar{eq:formula1}{
\gamma \nu p + \psi = -(T-t) \tfrac{1}{2} (\Lam)^2 ,
}
where, as a reminder, $p$ is the price of the European option with payoff $\varphi(X_T)$, as computed under the market's chosen pricing measure. As $p$ is independent of $\gamma,$ we can obtain a relationship between the implied Sharpe ratio and investor's risk-aversion parameter by solving for $\psi.$ In equation \eqref{eq:main psi}, we observe that $\psi$ depends on $\gamma$ only through the terminal condition. However, due to the presence of non-linearity $\Bc(\psi)$ in \eqref{eq:main psi}, deriving a closed-form formula for the relationship between $\psi$ and $\gamma$ is not possible. Even without the non-linearity in \eqref{eq:main psi}, the standard theory of partial differential equations (see Chapter 2, Section 4 \cite{friedman2008partial}) is not directly applicable to derive the relationship. 

\section{Asymptotic approximation formulas}
\label{sec:asymptotics}
In this section, we will use a Taylor series expansion method, as developed by \cite{lorig-4}, to derive asymptotic approximations for $\psi$, the option price $p$ and the implied Sharpe ratio $\Lam$.
\subsection{Asymptotic approximation formulas for $\psi$}
\label{sec:asymptotics psi}
Let $\chi$ be any of the coefficient functions appearing in the operators $\Act$ or $\Bc(\cdot),$ that is 
\eqstar{
\chi \in \{(\tfrac{1}{2} \sigma^2), (c-\rho \beta \lam), (\rho \sigma \beta),  (\tfrac{1}{2} \beta^2),(\tfrac{1}{2} \lam^2) \}.
} 
Next, fix a point $(\bar{x},\bar{y}) \in \Rb^2$ and define the following family of functions indexed by $\eps  \in  [0,1]:$
\eqlnostar{eq:eps coeff}{
\chi^\eps(x,y) &:= \chi(\bar{x} + \eps(x-\bar{x}),\bar{y} + \eps(y-\bar{y})).
}
Observe that
\eqstar{
\quad \chi^\eps(x,y)\big|_{\eps=1} = \chi(x,y), \qquad \chi^\eps(x,y)\big|_{\eps=0} = \chi(\bar{x},\bar{y}).
}
Suppose the functions are analytic in a neighbourhood of $(\bar{x},\bar{y})$ so that we have
\eqstar{
\chi^\eps(x,y) &:= \sum^\infty_{n=0} \eps^n \chi_n(x,y),\\
\chi_n(x,y) &:=\sum^n_{k=0} \chi_{n-k,k}  \cdot (x - \bar{x})^{n-k}(y - \bar{y})^{k},\quad \chi_{n-k,k} := \frac{1}{(n-k)!k!} \partial^{n-k}_x\partial^{k}_y \chi(\bar{x},\bar{y}).
 }
For the final $m$-th order approximation formula, we will only need the functions to be $m$-times differentiable. However, it will simplify the presentation if, for now, we assume the stronger condition of analyticity. Using the coefficient functions indexed by $\eps$, consider the following family of PDEs indexed by $\eps:$
\eqlnostar{eq:pde psi}{
0 = \left( \partial_t + \Act^\eps \right) \psi^\eps + \Bc^\eps(\psi^\eps),&  &\psi^\eps(T,x,y,\nu) = -\gamma \nu \varphi(x),
}
where $\Act^\eps$ and $\Bc^\eps(\cdot)$ are obtained from $\Act$ and $\Bc(\cdot)$ and by replacing the coefficients in these operators with their $\eps$-counterparts defined in \eqref{eq:eps coeff} 
\eqstar{
\Act^\eps &= (\tfrac{1}{2}\sig^2)^\eps (\partial^2_x - \partial_x) + (c - \rho \beta \lam)^\eps \partial_y + (\tfrac{1}{2} \beta^2)^\eps \partial^2_y + (\rho \sig \beta)^\eps \partial_x \partial_y,\\
\Bc^\eps(\psi^\eps) &= (1-\rho^2) (\tfrac{1}{2}\beta^2)^\eps (\partial_y \psi^\eps)^2 - (\tfrac{1}{2} \lam^2)^\eps.
}
Next, we suppose that $\psi^\eps$ is given in powers of $\eps$ as
\eqlnostar{eq:approx psi}{
\psi^\eps &= \sum^\infty_{i=0} \eps^i \psi_i.
}
Once we are able to solve for each $\psi_i,$ the asymptotic approximation of $\psi$ is obtained by setting $\eps = 1$ in \eqref{eq:approx psi}. To obtain the respective order terms $\psi_i,$ we insert the expansion of $\psi^\eps$ into \eqref{eq:pde psi} and collect the terms of like powers of $\eps$. At the lowest order of $\eps,$ we have the following
\eqstar{
\Oc(1):& &0 = (\partial_t  + \Act_0) \psi_0 + (1-\rho^2) (\tfrac{1}{2}\beta^2)_0 (\partial_y \psi_0)^2 -  (\tfrac{1}{2}\lam^2)_0, & &\psi_0(T,x,y,\nu) =-\gamma \nu \varphi(x),
}
where we have defined 
\eqlnostar{eq:n order A}{
\Act_n := (\tfrac{1}{2}\sig^2)_n (\partial^2_x - \partial_x) + (c - \rho \beta \lam)_n \partial_y + (\tfrac{1}{2} \beta^2)_n \partial^2_y + (\rho \sig \beta)_n \partial_x \partial_y, \quad n \in \{0\} \cup \Nb.
}
In the above, $\Act_0$ is a \textit{constant coefficient} differential operator since we have that $\chi_0:= \chi(\bar{x},\bar{y}).$ It can be seen that  $\psi_0$ is only a function of $(t,x).$ As such the equation becomes
\eqlnostar{eq:solve psi0}{
\Oc(1):& &0 = (\partial_t  + \Act_0) \psi_0  -  (\tfrac{1}{2}\lam^2)_0, & &\psi_0(T,x,\nu) = - \gamma \nu\varphi(x).
}
The higher order terms $m \geq 1$ are given as 
\eqlnostar{eq:solve psim}{
\Oc(\eps^m):& &0 = (\partial_t  + \Act_0) \psi_m + H_m, & &\psi_m(T,x,y,\nu)  = 0,
}
where the source terms $H_m$ are given as 
\eqstar{
\Oc(\eps^m):&  &H_m &= \sum^m_{k=1} \Act_k \psi_{m-k} - (\tfrac{1}{2}\lam^2)_m + (1-\rho^2)\sum_{k,i,j \in K_m} (\tfrac{1}{2}\beta^2)_k (\partial_y \psi_i) (\partial_y \psi_j),\\
& &K_m &= \{(i,k,j) \in \Nb^3_0 : i + j + k = m, \text{ and } i,j,k \neq m\}.
}
In particular, the first and second order source terms are given by 
\eqlnostar{eq:source H1}{
\Oc(\eps):& &H_1 &= \Act_1 \psi_0 -  (\tfrac{1}{2}\lam^2)_1, \\
\label{eq:source H2}
\Oc(\eps^2):& &H_2&= \Act_2 \psi_0 + \Act_1 \psi_1 -  (\tfrac{1}{2}\lam^2)_2 + (1-\rho^2) (\tfrac{1}{2}\beta^2)_0 (\partial_y \psi_1)^2.
}

Before, we proceed to solve for terms $\psi_i,$ we state a few fundamental results which will be useful for our calculations. The constant coefficient elliptic operator $\Act_0$ in \eqref{eq:solve psi0}, gives rise to the semigroup $\Pct_0(t,t_1)$ defined by 
\eqlnostar{eq:semigroup P0}{
\Pct_0(t,t_1) \eta(x,y) := \int_{\Rb^2} \dd x \dd y \Gamt_0 (t,x,y;t_1,x_1,y_1) \eta(x_1,y_1),
} 
where $\eta: \Rb^2 \to \Rb^2$ is a generic test function and $\Gamt_0$ is the \textit{fundamental solution} to the linear operator $(\partial_t + \Act_0)$ given as 
\eqstar{
\Gamt_0(t,x,y;t_1,x_1,y_1) = \frac{1}{\sqrt{(2\pi)^2 |\widetilde{\mathbf{C}}|}} \exp \left(-\frac{1}{2}\widetilde{\mathbf{m}}^\top  \widetilde{\mathbf{C}}^{-1} \widetilde{\mathbf{m}}\right).
}
The covariance matrix $\widetilde{\mathbf{C}}$ and vector $\widetilde{\mathbf{m}}$ are given as 
\eqstar{
\widetilde{\mathbf{C}} = (t_1 - t) \begin{pmatrix}
(\sigma^2)_0 & (\rho \sigma \beta)_0\\
(\rho \sigma \beta)_0 &(\beta^2)_0 \\
\end{pmatrix}, \quad \widetilde{\mathbf{m}} = (t_1 - t) \begin{pmatrix}
x_1 - x - (t_1 - t)(-\tfrac{1}{2}\sigma^2)_0 \\
y_1 - y - (t_1 - t)(c - \rho \beta \lambda)_0 \\
\end{pmatrix}.
}
The semigroup $\Pct_0(t,t_1)$ satisfies the following property:
\eqstar{
\Pct_0(t,t_1)\Pct_0(t_1,t_2) = \Pct_0(t,t_2), \quad t \leq t_1 \leq t_2.
}
By Duhamel's principle, the unique classical solution (if it exists) to any PDE of the form:
\eqstar{
0 = (\partial_t + \Act_0) u + h,& &u(T,x,y) = g(x,y),
}
is given as 
\eqstar{
u(t) = \Pct_0(t,T) \, g + \int^T_t\dd t_1 \Pct_0(t,t_1) h(t_1).
}
In the above we have omitted the arguments $(x,y)$ for simplicity of notation. To obtain the terms in the expansion of $\psi^\eps$ and $p^\eps,$ it will also be helpful to introduce the following operators
\eqlnostar{eq:op X}{
&\Xct(t,t_1) := x + (t_1-t) \left(-(\tfrac{1}{2} \sigma^2)_0 + 2 (\tfrac{1}{2} \sigma^2)_0 \partial_x + (\rho \sigma \beta)_0 \partial_y \right),&  &t_1 \geq t,\\
\label{eq:op Y}
&\Yct(t,t_1) := y + (t_1-t) \left( (c - \rho \beta \lambda)_0 +  2 (\tfrac{1}{2} \beta^2)_0 \partial_y + (\rho \sigma \beta)_0 \partial_x \right),& &t_1 \geq t.
}
By direct computations, we can check that the operators $\Xct(t,t_1)$ and $\Yct(t,t_1)$ commute and have the following property
\eqlnostar{eq:prop commute}{
\bigl(\Xct(t,t_1)\bigr)^n \bigl(\Yct(t,t_1)\bigr)^m \Gamt_0 (t,x,y;t_1,x_1,y_1) = x_1^n \, y_1^m \, \Gamt_0(t,x,y;t_1,x_1,y_1), \quad n,m \in \{0\} \cup \Nb.
}
Therefore, if function $f$ is a polynomial of $x$ and $y,$ we have 
\eqlnostar{eq:poly f}{
\Pct_0(t,t_1) f(x,y) &= \int_{\Rb^2} \dd x_1 \dd y_1 f(x_1,y_1) \Gamt_0(t,x,y;t_1,x_1,y_1) \nonumber\\
&= f\bigl(\Xct(t,t_1),\Yct(t,t_1) \bigr)\int_{\Rb^2} \dd x_1 \dd y_1 \Gamt_0(t,x,y;t_1,x_1,y_1)\nonumber\\
&= f\bigl(\Xct(t,t_1),\Yct(t,t_1) \bigr).
}
It is assumed above and throughout the following computations that if an operator is followed by nothing, it acts on the constant 1. Next, we also introduce the following operator
\eqlnostar{eq:op G}{
\Gct_n(t,t_1) := \Act_n(\Xct(t,t_1),\Yct(t,t_1)),\quad t_1 \geq t, n \geq 1,
}
where the notation $\Act_n(\Xct(t,t_1),\Yct(t,t_1))$ indicates that the $(x,y)$-dependence in coefficients $\Act_n \equiv \Act_n(x,y),$ has been replaced with $\bigl( \Xct(t,t_1), \Yct(t,t_1) \bigr).$ For example, in $\Gct_1(t,t_1) = \Act_1(\Xct(t,t_1),\Yct(t,t_1)),$ the term 
\eqstar{
\bigl(\tfrac{1}{2} \sigma^2 \bigr)_1 ( \partial^2_x -\partial_x) = \Bigl(\bigl(\tfrac{1}{2} \sigma^2 \bigr)_{1,0}(x - \bar{x}) + \bigl(\tfrac{1}{2} \sigma^2 \bigr)_{0,1}(y- \bar{y}) \Bigr) ( \partial^2_x -\partial_x) 
}
becomes 
\eqstar{
\Bigl(\bigl(\tfrac{1}{2} \sigma^2 \bigr)_{1,0}(\Xct(t,t_1) - \bar{x}) + \bigl(\tfrac{1}{2} \sigma^2 \bigr)_{0,1}(\Yct(t,t_1)- \bar{y}) \Bigr) ( \partial^2_x -\partial_x).
}
The following result is now in order:
\begin{lemma}[Lemma 3.7 \cite{lorig-4}]
\label{lemma:lorig}
Let the operators $\Act_n, \Pct_0(t,t_1), \Xct(t,t_1), \Yct(t,t_1)$ and $\Gct_n(t,t_1)$ be given by \eqref{eq:n order A}, \eqref{eq:semigroup P0}, \eqref{eq:op X}, \eqref{eq:op Y} and \eqref{eq:op G}, respectively.  Then, we have the following commutation-like relation
\eqstar{
\Pct_0(t,t_1) \Act_n = \Gct_n(t,t_1) \Pct_0(t,t_1),
}
where the operators act on measurable functions $f \in C^{n+2}(\Rb^2)$ whose partial derivatives of all orders less than or equal to $n+2$ are measurable and at most exponentially growing. 
\end{lemma}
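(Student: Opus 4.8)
The plan is to reduce the operator identity to the two structural facts already recorded in the excerpt: the commutation property \eqref{eq:prop commute} of $\Xct,\Yct$ against the Gaussian kernel $\Gamt_0$, and the fact that $\Gamt_0$ depends on the forward variables $(x_1,y_1)$ only through $x_1-x$ and $y_1-y$ (the drift contributions being deterministic functions of $t_1-t$). First I would write $\Act_n$ in the canonical form $\Act_n=\sum_\alpha \chi^{(\alpha)}_n(x,y)\,D_\alpha$, where $\alpha$ indexes the four pieces, $D_\alpha\in\{\partial_x^2-\partial_x,\ \partial_y,\ \partial_y^2,\ \partial_x\partial_y\}$, and each $\chi^{(\alpha)}_n$ is the corresponding homogeneous degree-$n$ polynomial coefficient from \eqref{eq:n order A} (namely $(\tfrac12\sigma^2)_n$, $(c-\rho\beta\lam)_n$, $(\tfrac12\beta^2)_n$, $(\rho\sigma\beta)_n$). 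Applying $\Pct_0(t,t_1)\Act_n$ to a test function $f\in C^{n+2}(\Rb^2)$ with at most exponentially growing derivatives and using the integral representation \eqref{eq:semigroup P0} gives $\Pct_0(t,t_1)\Act_n f=\sum_\alpha\int_{\Rb^2}\dd x_1\dd y_1\,\Gamt_0(t,x,y;t_1,x_1,y_1)\,\chi^{(\alpha)}_n(x_1,y_1)\,(D_\alpha f)(x_1,y_1)$.

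Next I would establish two commutation identities. (i) $\Pct_0(t,t_1)\,D_\alpha=D_\alpha\,\Pct_0(t,t_1)$ for each of the four pieces, the left-hand $D_\alpha$ acting in $(x_1,y_1)$ and the right-hand one in $(x,y)$; this follows by integration by parts, using $\partial_{x_1}\Gamt_0=-\partial_x\Gamt_0$ and $\partial_{y_1}\Gamt_0=-\partial_y\Gamt_0$, with the boundary terms vanishing because $\Gamt_0$ is Gaussian while $f$ and its derivatives are at most exponentially growing (for the piece $\partial_x^2-\partial_x$ one checks $\int\Gamt_0(\partial_{x_1}^2-\partial_{x_1})f=(\partial_x^2-\partial_x)\int\Gamt_0 f$). (ii) The recentered polynomial identity $\chi^{(\alpha)}_n(\Xct(t,t_1),\Yct(t,t_1))\,\Gamt_0=\chi^{(\alpha)}_n(x_1,y_1)\,\Gamt_0$: since $\Xct(t,t_1)$ and $\Yct(t,t_1)$ commute, expanding a monomial $(x-\bar x)^{n-k}(y-\bar y)^k$ binomially and applying \eqref{eq:prop commute} term by term gives $(\Xct-\bar x)^{n-k}(\Yct-\bar y)^k\Gamt_0=(x_1-\bar x)^{n-k}(y_1-\bar y)^k\Gamt_0$, and summing recovers the claim by linearity.

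Finally I would assemble the pieces. Using (ii) to replace $\chi^{(\alpha)}_n(x_1,y_1)$ inside the integral by the $(x,y)$-differential operator $\chi^{(\alpha)}_n(\Xct,\Yct)$ acting on $\Gamt_0$, pulling that operator out of the $\dd x_1\dd y_1$ integral (differentiation under the integral sign, again licit because of the Gaussian kernel against the exponentially bounded integrand), and then invoking (i), I obtain
\begin{align*}
\Pct_0(t,t_1)\Act_n f &= \sum_\alpha \chi^{(\alpha)}_n(\Xct,\Yct)\int_{\Rb^2}\dd x_1\dd y_1\,\Gamt_0\,(D_\alpha f)(x_1,y_1) \\
&= \sum_\alpha \chi^{(\alpha)}_n(\Xct,\Yct)\,\Pct_0(t,t_1)\,(D_\alpha f) \\
&= \sum_\alpha \chi^{(\alpha)}_n(\Xct,\Yct)\,D_\alpha\,\Pct_0(t,t_1) f = \Gct_n(t,t_1)\,\Pct_0(t,t_1) f,
\end{align*}
the last equality being the definition \eqref{eq:op G} of $\Gct_n$; note that the coefficient operator sits to the left of $D_\alpha$, which matters since $\Xct,\Yct$ do not commute with $\partial_x,\partial_y$.

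I expect the main obstacle to be purely analytic bookkeeping rather than a conceptual one: checking that the integration-by-parts boundary terms vanish and that differentiation under the integral sign is justified, both of which rest on pairing the fast Gaussian decay of $\Gamt_0$ with the at-most-exponential growth of $f$ and its derivatives up to order $n+2$; and keeping careful track of operator ordering throughout, since $\chi^{(\alpha)}_n(\Xct,\Yct)$ and $D_\alpha$ do not commute. The algebraic content is entirely carried by the identity \eqref{eq:prop commute}.
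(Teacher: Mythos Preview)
The paper does not supply its own proof of this lemma: it is quoted verbatim from \cite{lorig-4} and used as a black box. Your argument is correct and is in fact the standard proof of this identity. The two ingredients you isolate---(i) that $\Pct_0(t,t_1)$ commutes with constant-coefficient differential operators because $\Gamt_0$ is a function of $x_1-x$ and $y_1-y$, and (ii) that multiplication by the polynomial coefficient $\chi^{(\alpha)}_n(x_1,y_1)$ inside the integral can be traded for the backward-variable operator $\chi^{(\alpha)}_n(\Xct,\Yct)$ acting on $\Gamt_0$ via \eqref{eq:prop commute}---are exactly the mechanism behind the result, and your assembly step preserves the correct left-to-right ordering $\chi^{(\alpha)}_n(\Xct,\Yct)\,D_\alpha$ that defines $\Gct_n$. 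The analytic justifications (vanishing boundary terms, differentiation under the integral) are indeed routine under the stated growth hypothesis, since a Gaussian dominates any exponential.
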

To obtain semi-explicit approximation formulas, we fix the European option payoff function as a call option, that is, $\varphi(x) = (\ee^x - \ee^k)^+$ where $k$ is the log strike. Using this choice of $\varphi,$ we obtain the second order approximation of $\psi$ in \eqref{eq:approx psi} using the following result:
\begin{proposition}
\label{prop:approx psi}
Let $\psi_0$ be the unique classical solution of \eqref{eq:solve psi0} and $\psi_1$ and $\psi_2$ be the unique classical solutions of \eqref{eq:solve psim} with source terms $H_1$ \eqref{eq:source H1} and $H_2$ \eqref{eq:source H2}, respectively. Assume that the coefficients $(\tfrac{1}{2}\lambda^2), (\tfrac{1}{2}\sigma^2), (c - \rho \beta \lambda), (\tfrac{1}{2}\beta^2)$ and $( \rho \sigma \beta)$ belong to the class of $C^2(\Rb^2)$ functions. Then, omitting arguments $(x,y)$ for clarity, we have
\eqstar{
\psi_0(t) =& - (\tfrac{1}{2}\lam^2)_0 (T-t) - \gamma \nu  p^{BS}(t),\\
p^{BS}(t) :=& \, \ee^x \Phi\bigl(d_+(t)\bigr) -  \ee^k \Phi\bigl(d_-(t)\bigr), \quad d_{\pm}(t)   = \frac{1}{\sigma_0 \sqrt{T-t}} \bigl(x - k \pm \tfrac{1}{2}\sigma_0^2 (T-t) \bigr),\\
\psi_1(t) =&-\gamma \nu   \int^T_t \dd t_1 \Gct_1(t,t_1) p^{BS}(t) - \int^T_t \dd t_1 (\tfrac{1}{2}\lam^2)_1 \bigl( \Xct(t,t_1),\Yct(t,t_1) \bigr),\\
\psi_2(t) =& -\gamma \nu \Bigl( \int^T_t \dd t_1\int^T_{t_1} \dd t_2 \Gct_1(t,t_1)\Gct_1(t_1,t_2) +  \int^T_t  \dd t_1 \Gct_2(t,t_1) \Bigr) p^{BS}(t)\\
& - \int^T_t \dd t_1 (\tfrac{1}{2}\lam^2)_2 (\Xct(t,t_1),\Yct(t,t_1))  -\int^T_t  \dd t_1 \Gct_1(t,t_1) \int^T_{t_1} \dd t_2(\tfrac{1}{2}\lam^2)_1  \bigl( \Xct(t,t_2),\Yct(t,t_2) \bigr)\\
&+ (1-\rho^2) (\tfrac{1}{2}\beta^2)_0 \left[ (\tfrac{1}{2}\lam^2)^2_{0,1} \tfrac{1}{3}(T-t)^3 + 2 \gamma \nu \int^T_t \dd t_1(\tfrac{1}{2}\lam^2)_{0,1}  (T-t_1) \partial_y \Bigl( \int^T_{t_1} \dd t_2 \Gct_1(t,t_2) \Bigr)  p^{BS}(t) \right.\\
&\left. + \gamma^2 \nu^2 \frac{(\tfrac{1}{2}\sigma^2)^2_{0,1}}{2 \pi \sigma^2_0} \int^T_t \dd t_1 \frac{(T-t_1)^{3/2}}{\sqrt{T-t+t_1 -t}} \exp \left(2 k -  \frac{\bigl( (k-x) + \tfrac{1}{2}\sigma^2_0(T-t) \bigr)^2 }{\sigma^2_0 (T-t+t_1-t)}\right) \right].
}
\end{proposition}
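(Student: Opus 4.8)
The plan is to integrate the hierarchy \eqref{eq:solve psi0}--\eqref{eq:solve psim} one order at a time with Duhamel's principle (as stated above), and then to reduce each resulting expression to the displayed closed form using three tools: Lemma \ref{lemma:lorig}, the semigroup identity $\Pct_0(t,t_1)\Pct_0(t_1,t_2)=\Pct_0(t,t_2)$, and the polynomial-propagation identity \eqref{eq:poly f}. Throughout one uses that $\psi_0$ is a function of $(t,x)$ only, as already noted below \eqref{eq:n order A} (so the $\beta^2$-nonlinearity is absent at order $0$), and one checks en route that the functions produced --- notably $p^{BS}$ and its $x$-derivatives --- are at most exponentially growing, so that Lemma \ref{lemma:lorig} applies and the interchanges of the operators $\Pct_0,\Act_n$ with the time integrals are legitimate under the stated $C^2$ hypotheses. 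For $\psi_0$, Duhamel applied to \eqref{eq:solve psi0} gives $\psi_0(t)=\Pct_0(t,T)(-\gamma\nu\varphi)-\int_t^T dt_1\,\Pct_0(t,t_1)(\tfrac12\lam^2)_0$; the kernel $\Gamt_0$ is a Gaussian density, so $\Pct_0$ preserves constants and the last term equals $-(\tfrac12\lam^2)_0(T-t)$, and since $\varphi(x_1)=(\ee^{x_1}-\ee^k)^+$ depends only on $x_1$, integrating $\Gamt_0$ over $y_1$ leaves the lognormal call-pricing integral with volatility $\sigma_0$, i.e.\ $\Pct_0(t,T)\varphi=p^{BS}(t)$. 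This yields the stated $\psi_0$.

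For $\psi_1$, I apply Duhamel to \eqref{eq:solve psim} with the source $H_1$ from \eqref{eq:source H1}. Because $\Act_1$ carries no zeroth-order coefficient, it annihilates the $t$-only part of $\psi_0$, so $H_1=-\gamma\nu\,\Act_1 p^{BS}-(\tfrac12\lam^2)_1$. For the first piece, write $p^{BS}(t_1)=\Pct_0(t_1,T)\varphi$, move $\Pct_0(t,t_1)$ past $\Act_1$ with Lemma \ref{lemma:lorig}, and collapse $\Pct_0(t,t_1)\Pct_0(t_1,T)=\Pct_0(t,T)$ to obtain $\Gct_1(t,t_1)p^{BS}(t)$; for the second piece, $(\tfrac12\lam^2)_1$ is affine in $(x,y)$, so \eqref{eq:poly f} gives $(\tfrac12\lam^2)_1(\Xct(t,t_1),\Yct(t,t_1))$. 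This is the stated $\psi_1$.

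For $\psi_2$, Duhamel with the source $H_2$ from \eqref{eq:source H2} splits into four contributions. The $\Act_2\psi_0$ and $-(\tfrac12\lam^2)_2$ terms are handled exactly as for $\psi_1$ (giving $-\gamma\nu\int_t^T dt_1\,\Gct_2(t,t_1)p^{BS}(t)$ and $-\int_t^T dt_1\,(\tfrac12\lam^2)_2(\Xct(t,t_1),\Yct(t,t_1))$). The $\Act_1\psi_1$ term is handled by iterating: insert the Duhamel representation of $\psi_1$, move $\Pct_0$ past $\Act_1$ twice with Lemma \ref{lemma:lorig}, collapse the composed semigroups each time onto $p^{BS}(t)$ (respectively use \eqref{eq:poly f} on the affine term), and one is led to the nested double-time-integral of a product of $\Gct_1$-operators acting on $p^{BS}(t)$ together with the $\Gct_1(t,t_1)\int_{t_1}^T dt_2\,(\tfrac12\lam^2)_1(\Xct,\Yct)$ term displayed in the statement.

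The last contribution, $(1-\rho^2)(\tfrac12\beta^2)_0(\partial_y\psi_1)^2$, is where the structural shortcuts no longer apply, and it is the crux of the proof. One first computes $\partial_y\psi_1$: since $p^{BS}$ does not depend on $y$, inside $\Gct_1(t,t_1)p^{BS}$ only the term carrying $(\partial_x^2-\partial_x)$ survives and its $y$-dependence is affine, so $\partial_y[\Gct_1(t,t_1)p^{BS}(t)]=(\tfrac12\sigma^2)_{0,1}(\partial_x^2-\partial_x)p^{BS}(t)$, while $\partial_y[(\tfrac12\lam^2)_1(\Xct(t,t_1),\Yct(t,t_1))]=(\tfrac12\lam^2)_{0,1}$; hence $\partial_y\psi_1(t_1)=-\gamma\nu(\tfrac12\sigma^2)_{0,1}(T-t_1)(\partial_x^2-\partial_x)p^{BS}(t_1)-(\tfrac12\lam^2)_{0,1}(T-t_1)$. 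Squaring produces three terms, to which we apply $\int_t^T dt_1\,\Pct_0(t,t_1)(\cdot)$: the pure-drift term integrates to $\tfrac13(T-t)^3(\tfrac12\lam^2)^2_{0,1}$; the cross term uses that the constant-coefficient operator $(\partial_x^2-\partial_x)$ commutes with $\Pct_0$ and that $\Pct_0(t,t_1)\Pct_0(t_1,T)=\Pct_0(t,T)$ to reduce to $(\partial_x^2-\partial_x)p^{BS}(t)$, which is re-packaged as $\partial_y(\int_{t_1}^T dt_2\,\Gct_1(t,t_2))p^{BS}(t)$; and for the squared term one uses the closed form $(\partial_x^2-\partial_x)p^{BS}(t)=\ee^k(\sigma_0\sqrt{2\pi(T-t)})^{-1}\exp(-((k-x)+\tfrac12\sigma_0^2(T-t))^2/(2\sigma_0^2(T-t)))$ --- a Gaussian density up to a multiplicative factor --- and evaluates $\Pct_0(t,t_1)$ applied to its square as a Gaussian-times-Gaussian integral by completing the square, the ``variances'' $\sigma_0^2(t_1-t)$ and $\tfrac12\sigma_0^2(T-t_1)$ combining to $\tfrac12\sigma_0^2(T-t+t_1-t)$. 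Collecting the four contributions and setting $\eps=1$ yields the claimed $\psi_2$. I expect the main obstacle to be exactly this Gaussian convolution of the squared gamma, for which none of the linear identities help and one must integrate by hand; a secondary one is the bookkeeping needed to see which derivative terms survive when the operators act on the $y$-independent $p^{BS}$, together with the routine verification that the series insertion and all operator/time-integral interchanges are justified by the $C^2$ and exponential-growth hypotheses.
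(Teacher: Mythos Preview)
Your proposal is correct and follows essentially the same route as the paper's proof: Duhamel's principle at each order, combined with Lemma \ref{lemma:lorig}, the semigroup property of $\Pct_0$, and the polynomial identity \eqref{eq:poly f}, with the nonlinear $(\partial_y\psi_1)^2$ contribution handled via the same explicit computation of $\partial_y\psi_1$ and a Gaussian convolution. The only cosmetic difference is that the paper defers the final Gaussian-square integral to an external reference (Appendix~B of \cite{lorig-4}), whereas you sketch the completion-of-the-square argument directly.
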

\begin{proof}
See Appendix \ref{sec:proofs}.
\end{proof}
To study the effect of the risk-aversion parameter $\gam$ on the implied Sharpe ratio $\Lambda,$ we derive the asymptotic approximation formula only up to the second order. The second order approximation is sufficient to perform a comparative-static study with respect to the risk-aversion parameter. 
\subsection{Asymptotic approximation formulas for $p$}
\label{sec:asymptotics p}
In an incomplete market, the investors may not agree on their choice of the pricing measure. Thus, price $p$ of the European option with payoff $\varphi$ will depend on the chosen model. In our study, as we fix the model for our reference investor in \eqref{eq:main sde1}-\eqref{eq:main sde2}, it makes sense to obtain $p$ under the appropriate pricing measure based on the model instead of using a market-given European option price. 

We suppose that the pricing measure $\Pbh$ is related to the physical measure $\Pb$ through the following Radon-Nikodym derivative: 

\eqlnostar{eq:likelihood ratio pricing}{
\frac{ \dd \Pbh }{ \dd \Pb } 	&=	\exp \left( -\frac{1}{2} \int_0^T \Bigl(\frac{\mu^2(X_t,Y_t)}{\sig^2(X_t,Y_t)} + \Omega^2(X_t,Y_t) \Bigr)\dd t  - \int_0^T \frac{\mu(X_t,Y_t)}{\sig (X_t,Y_t)} \dd B_t^X - \int_0^T \Omega(X_t,Y_t) \dd B_t^Y\right).
}
Defining $(\Pbh,\Fb)$-Brownian motions $\Bh^X$ and $\Bh^Y$ by
\eqstar{
\Bh^X :=	B_t^X + \int_0^t \frac{\mu(X_s,Y_s)}{\sig(X_s,Y_s)} \dd s, \quad \Bh^Y := B_t^Y + \int_0^t \Omega(X_s,Y_s) \dd s,
}
we see that the dynamics of $(X,Y)$ can be written under $\Pbh$ as
\eqstar{
\dd X_t &=	- \tfrac{1}{2} \sig^2(X_t,Y_t) \dd t + \sig(X_t,Y_t)  \dd \Bh_t^X , \\
\dd Y_t &=	b(X_t,Y_t) \dd t + \beta(X_t,Y_t) \Big( \rho  \dd \Bh_t^X + \sqrt{1-\rho^2}  \dd \Bh_t^Y \Big) ,
}
where the function $b$ is given as
\eqstar{
b(x,y) &:=	c(x,y) -  \rho \beta(x,y) \lambda (x,y) - \sqrt{1- \rho^2} \beta(x,y) \Omega(x,y).
}
The price of a European-style option with payoff function $\varphi$ with $(X,Y)$ having risk-neutral dynamics under $\Pbh$ is given as 
\eqstar{
p(t,x,y) = \Ebh_{t,x,y} \varphi(X_T),
}where $\Ebh$ denotes the expectation operator under $\Pbh$ defined in \eqref{eq:likelihood ratio pricing}. The function $p$ satisfies the linear pricing PDE 
\eqstar{
0  = (\partial_t  + \Ach) p,& & p(T,x,y) = \varphi(x),
}
where $\Ach := \Act - \sqrt{1-\rho^2} \beta \, \Omega\, \partial_y.$ To obtain asymptotic approximation formulas, once again we seek a solution $p^\eps = \sum^\infty_{k=0} \eps^k p_k$ to the following family of PDEs
\eqstar{
0  = (\partial_t  + \Ach^\eps) p^\eps,& & p^\eps(T,x,y) = \varphi(x),
}
where $\Ach^\eps$ is the $\eps$-counterpart of $\Ach$ obtained by replacing the coefficients in it with their $\eps$-counterparts. Upon collecting the terms of like order of $\eps,$ we find that the individual terms satisfy
\eqlnostar{eq:solve p0}{
&\Oc(1):& &0 = (\partial_t  + \Ach_0)p_0, &p_0(T,x) &= \varphi(x),\\
\label{eq:solve pm}
&\Oc(\eps^m):& &0 = (\partial_t  + \Ach_0) p_m + P_m, & p_m(T,x,y) &=0, m \geq 1,
}
where the $m$th-order source term $P_m$ is given by 
\eqlnostar{eq:source p}{
\Oc(\eps^m):& & P_m = \sum^m_{k=1} \Ach_k p_{m-k},
}
and $\Ach_n$ is given by
\eqstar{
\Ach_n := (\tfrac{1}{2}\sig^2)_n (\partial^2_x - \partial_x) + (c - \rho \beta \lam - \sqrt{1-\rho^2} \beta \Omega )_n \partial_y + (\tfrac{1}{2} \beta^2)_n \partial^2_y + (\rho \sig \beta)_n \partial_x \partial_y, \quad n \in \{0\} \cup \Nb.
}
Like $\Act_0$ in Section \ref{sec:asymptotics psi}, $\Ach_0$ gives rise to a semigroup 
$\Pch_0(t,t_1)$ defined by 
\eqstar{
\Pch_0(t,t_1) \eta(x,y) := \int_{\Rb^2} \dd x \dd y \Gamh_0 (t,x,y;t_1,x_1,y_1) \eta(x_1,y_1),
} 
where $\eta: \Rb^2 \to \Rb^2$ is a generic test function and $\Gamh_0$ is the \textit{fundamental solution} to the linear operator $(\partial_t + \Ach_0)$ given as 
\eqstar{
\Gamh_0(t,x,y;t_1,x_1,y_1) = \frac{1}{\sqrt{(2\pi)^2 |\widehat{\mathbf{C}}|}} \exp \left(-\frac{1}{2}\widehat{\mathbf{m}}^\top  \widehat{\mathbf{C}}^{-1} \widehat{\mathbf{m}}\right).
}
The covariance matrix $\widehat{\mathbf{C}}$ and vector $ \widehat{\mathbf{m}}$ are given as 
\eqstar{
\widehat{\mathbf{C}} = (t_1 - t) \begin{pmatrix}
(\sigma^2)_0 & (\rho \sigma \beta)_0\\
(\rho \sigma \beta)_0 &(\beta^2)_0 \\
\end{pmatrix}, \quad \widehat{\mathbf{m}} = (t_1 - t) \begin{pmatrix}
x_1 - x - (t_1 - t)(-\tfrac{1}{2}\sigma^2)_0 \\
y_1 - y - (t_1 - t)(c - \rho \beta \lambda - \sqrt{1-\rho^2}\beta\Omega)_0 \\
\end{pmatrix}.
}
Analogous to $\Xct, \Yct$ and $\Gct$ in Section \ref{sec:asymptotics psi}, we define $\Xch, \Ych$ and $\Gch$ as follows:
\eqstar{
&\Xch(t,t_1) := x + (t_1-t) \left(-(\tfrac{1}{2} \sigma^2)_0 + 2 (\tfrac{1}{2} \sigma^2)_0 \partial_x + (\rho \sigma \beta)_0 \partial_y \right),&  &t_1 \geq t,\\
&\Ych(t,t_1) := y + (t_1-t) \left( (c - \rho \beta \lambda)_0 +  2 (\tfrac{1}{2} \beta^2)_0 \partial_y + (\rho \sigma \beta)_0 \partial_x \right),& &t_1 \geq t,\\
&\Gch_n(t,t_1) := \Ach_n(\Xch(t,t_1),\Ych(t,t_1)),& &t_1 \geq t, n \geq 1.
}
The operators $\Pch_0, \Xch, \Ych$ and $\Gch$ satisfy the same properties as $\Pct_0,\Xct, \Yct$ and $\Gct,$ respectively, in Section \ref{sec:asymptotics psi}. We obtain the following result for approximation of the European option price $p:$
\begin{proposition}
\label{prop:approx p}
Let $p_0$ be the unique classical solution of \eqref{eq:solve p0}, and, $p_1$ and $p_2$ be the unique classical solutions of \eqref{eq:solve pm} with source terms $P_1$ and $P_2$, respectively, obtained from \eqref{eq:source p}. Assume that the coefficients $(\tfrac{1}{2}\lambda^2), (\tfrac{1}{2}\sigma^2), (c - \rho \beta \lambda - \sqrt{1-\rho^2} \beta \Omega), (\tfrac{1}{2}\beta^2)$ and $( \rho \sigma \beta)$ belong to the class of $C^2(\Rb^2)$ functions. Then, omitting arguments $(x,y)$ for clarity, we have 
\eqstar{
p_0(t) =& p^{BS}(t),\\
p^{BS}(t) :=& \, \ee^x \Phi\bigl(d_+(t)\bigr) -  \ee^k \Phi\bigl(d_-(t)\bigr), \quad d_{\pm}(t)   = \frac{1}{\sigma_0 \sqrt{T-t}} \bigl(x - k \pm \tfrac{1}{2}\sigma_0^2 (T-t) \bigr),\\
p_1(t) =&  \int^T_t \dd t_1 \Gch_1(t,t_1) p^{BS}(t),\\
p_2(t) =& \left( \int^T_t \dd t_1 \Gch_2(t,t_1) +  \int^T_t \dd t_1\int^T_{t_1} \dd t_2 \Gch_1(t,t_1)\Gch_1(t_1,t_2)\right) p^{BS}(t).
}
\end{proposition}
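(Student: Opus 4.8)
The plan is to follow the argument behind Proposition~\ref{prop:approx psi} in its linear specialization: the pricing PDE~\eqref{eq:solve p0}--\eqref{eq:source p} carries no Hamiltonian term, so none of the combinatorics of a quadratic nonlinearity is needed and only the linear pieces of that computation survive. Existence and uniqueness of the classical solutions $p_0,p_1,p_2$ are part of the hypothesis, so it remains only to produce their closed forms. The three tools are: (i) the leading-order operator $\Ach_0$ is constant-coefficient and, on functions of $x$ alone, coincides with the log-price Black--Scholes operator; (ii) Duhamel's principle, which represents each $p_m$, $m\ge1$, as $p_m(t)=\int_t^T\dd t_1\,\Pch_0(t,t_1)P_m(t_1)$ with $P_m=\sum_{k=1}^m\Ach_k p_{m-k}$; and (iii) the commutation-like identity of Lemma~\ref{lemma:lorig} in its hatted form, $\Pch_0(t,t_1)\Ach_k=\Gch_k(t,t_1)\Pch_0(t,t_1)$, together with the semigroup law $\Pch_0(t,t_1)\Pch_0(t_1,t_2)=\Pch_0(t,t_2)$ and $\Pch_0(t,T)\varphi=p_0(t)$.

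First I would pin down the order-zero term. Since $\Ach_0$ has constant coefficients and the call payoff $\varphi(x)=(\ee^x-\ee^k)^+$ is $y$-independent, the solution of \eqref{eq:solve p0} is $y$-independent and solves $0=\partial_t p_0+(\tfrac{1}{2}\sigma^2)_0(\partial_x^2-\partial_x)p_0$ with terminal datum $\varphi$; this is the Black--Scholes equation in the log variable with volatility $\sigma_0=\sqrt{(\sigma^2)_0}$, whose solution is the Black--Scholes call price $p^{BS}(t)$ of the statement. Equivalently one writes $p_0(t)=\Pch_0(t,T)\varphi$ and integrates the Gaussian kernel $\Gamh_0$ over $y_1$, leaving exactly the Black--Scholes transition density in $x_1$.

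Next I would compute the orders $m=1$ and $m=2$ from Duhamel. For $m=1$, $P_1=\Ach_1 p_0$; Lemma~\ref{lemma:lorig} gives $p_1(t)=\int_t^T\dd t_1\,\Gch_1(t,t_1)\Pch_0(t,t_1)p_0(t_1)$, and $\Pch_0(t,t_1)p_0(t_1)=\Pch_0(t,t_1)\Pch_0(t_1,T)\varphi=\Pch_0(t,T)\varphi=p^{BS}(t)$ collapses this to $p_1(t)=\int_t^T\dd t_1\,\Gch_1(t,t_1)p^{BS}(t)$. For $m=2$, $P_2=\Ach_1 p_1+\Ach_2 p_0$: the term $\Ach_2 p_0$ is handled exactly as at first order and contributes $\int_t^T\dd t_1\,\Gch_2(t,t_1)p^{BS}(t)$, while for $\Ach_1 p_1$ I would substitute the Duhamel representation of $p_1(t_1)$, absorb the two resulting semigroups to the far right via the semigroup law, and apply Lemma~\ref{lemma:lorig} twice, so that after Fubini the double time-integral over $t\le t_1\le t_2\le T$ of a product of two $\Gch_1$ operators acting on $p^{BS}(t)$ appears; adding the two contributions yields the stated $p_2(t)$. (The same pattern, by induction on $m$, gives every order, but only $m\le2$ is needed here.) Throughout, the $C^2$ hypothesis on the coefficients makes $\Ach_1,\Ach_2$ well defined, and since $p^{BS}(\cdot)$ and all intermediate functions produced by the operators $\Gch_k$ are smooth in $x$ for $t<T$, grow at most exponentially in $x$, and depend polynomially on $y$, the regularity requirements of Lemma~\ref{lemma:lorig} are satisfied at each application.

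The only delicate point is the bookkeeping in the last step: commuting $\Pch_0$ repeatedly past the $\Ach_k$'s requires care with the time arguments of the operators $\Gch_k(\cdot,\cdot)$ and with the order in which Lemma~\ref{lemma:lorig} and the semigroup property are invoked, and one must check that each intermediate function still lies in the class to which Lemma~\ref{lemma:lorig} applies. This is exactly the $\lambda$-free, $\gamma$-free part of the computation already performed for $\psi$ in Appendix~\ref{sec:proofs}, to which I would refer; the remaining closed-form evaluation of objects such as $\int_t^T\dd t_1\,\Gch_1(t,t_1)p^{BS}(t)$ is a routine Gaussian calculation, done by expanding $\Ach_k$ into monomials in $x$ and $y$ and using property~\eqref{eq:prop commute} of $\Xch$ and $\Ych$ to reduce each monomial to $x$-derivatives of $p^{BS}$ with polynomial prefactors.
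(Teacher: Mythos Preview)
Your proposal is correct and follows essentially the same approach as the paper: Duhamel's principle for each order, the commutation relation of Lemma~\ref{lemma:lorig} in its hatted form, and the semigroup law $\Pch_0(t,t_1)\Pch_0(t_1,T)\varphi=p^{BS}(t)$ to collapse the inner factors. The paper's proof is exactly this computation, so nothing further is needed (your closing remark about evaluating $\int_t^T\dd t_1\,\Gch_1(t,t_1)p^{BS}(t)$ via Gaussian integrals is superfluous, since the proposition's formulas already stop at the operator level).
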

\begin{proof}
See Appendix \ref{sec:proofs}.
\end{proof}
\subsection{Asymptotic approximation formulas for $\Lam$}
\label{sec:asymptotics lambda}
We use the results in Proposition \ref{prop:approx psi} and \ref{prop:approx p} to obtain an asymptotic formula for the implied Sharpe ratio $\Lam.$ 
Let us define $\Lam^\eps$ is the positive solution of the following equation
\begin{align}
\gamma \nu p^\eps + \psi^\eps = -(T-t) \tfrac{1}{2} (\Lam^\eps)^2 , \label{eq:Lambda-eps}
\end{align}
which is obtained by replacing $p$ and $\psi$ in \eqref{eq:formula1} by their $\eps$ counterparts.
Recalling that $p^\eps |_{\eps=1} = p$ and $\psi^\eps |_{\eps=1} = \psi$, it follows that $\Lam^\eps |_{\eps=1} = \Lam$.
Thus, in order to find an asymptotic approximation for $\Lam$, we expand $\Lam^\eps$ in powers of $\eps$ as follows
\eqstar{
\Lam^\eps = \sum^\infty_{i=0} \eps^i \Lam_i .
}
Once we obtain expressions for $(\Lam_i)$, our $n$-th order approximation for $\Lam$ will be obtained by truncating the above series at order $n$ and setting $\eps = 1$.
In order to find explicit expressions for the terms$(\Lam_i)$ we insert the expansions for $p^\eps$, $\psi^\eps$ and $\Lam^\eps$ into \eqref{eq:Lambda-eps} and collect terms of like order in $\eps$.  We obtain 
\eqstar{
&\Oc(1):&  \gam \nu p_0 + \psi_0 &=-(T-t)\tfrac{1}{2}\Lam_0^2,& \\
&\Oc(\eps^k):&  \gam \nu p_k + \psi_k  &= -(T-t)\tfrac{1}{2}\Bigl( 2\Lam_0 \Lam_k + \sum_{i+j = k} \Lam_i \Lam_j \Bigr),& k \geq 1.
}
By solving for $\Lam_k,$ we obtain
\eqstar{
&\Oc(1):&  \Lam_0 &= \sqrt{\frac{\gam \nu p_0 + \psi_0}{-\tfrac{1}{2}(T-t)}},& \\
&\Oc(\eps^k):& \Lam_k  &= -\frac{1}{(T-t)\Lam_0}\Bigl( \gamma \nu p_k + \psi_k + (T-t) \tfrac{1}{2} \sum_{i+j=k} \Lam_i \Lam_j \Bigr).
}
\begin{proposition}
\label{prop:approx imp Sharpe}
Let $p_0$ be the unique classical solution of \eqref{eq:solve p0} and $p_1$ and $p_2$ be the unique classical solutions of \eqref{eq:solve pm} with source terms $P_1$ and $P_2$, respectively, obtained from \eqref{eq:source p}. Furthermore, let $\psi_0$ be the unique classical solution of \eqref{eq:solve psi0} and $\psi_1$ and $\psi_2$ be the unique classical solutions of \eqref{eq:solve psim} with source terms $H_1$ \eqref{eq:source H1} and $H_2$ \eqref{eq:source H2}, respectively. Assume that the coefficients $(\tfrac{1}{2}\lambda^2), (\tfrac{1}{2}\sigma^2), (c - \rho \beta \lambda),  (c - \rho \beta \lambda - \sqrt{1-\rho^2}\beta \Omega),  (\tfrac{1}{2}\beta^2)$ and $( \rho \sigma \beta)$ belong to the class of $C^2(\Rb^2)$ functions. Then, the second order approximation of the implied Sharpe ratio, defined as
\eqstar{
\bar{\Lam}_2(t,x,y,w,\nu) := (\Lam_0 + \Lam_1 + \Lam_2)(t,x,y,w,\nu)
} is given explicitly by
\eqstar{
\Lam_0 &= \lambda_0,\\
\Lam_1 &= -\frac{1}{2\Lambda_0(T-t)} (\gamma \nu p_1 + \psi_1) ,\\
\Lam_2  &= -\frac{1}{2(T-t)\Lam_0}\Bigl( \gamma \nu p_2 + \psi_2 + (T-t)\frac{ \Lambda_1^2}{2} \Bigr).
}
\end{proposition}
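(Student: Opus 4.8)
The plan is to read \eqref{eq:Lambda-eps} as an identity of formal power series in $\eps$, extract the coefficients $\Lam_i$ recursively, and then substitute the closed forms for $p_0,p_1,p_2$ and $\psi_0,\psi_1,\psi_2$ supplied by Propositions \ref{prop:approx p} and \ref{prop:approx psi}. First I would record why $\Lam^\eps$ admits such an expansion: setting $F(\ell,\eps):=\gamma\nu\,p^\eps+\psi^\eps+\tfrac12(T-t)\ell^2$, the order-$\eps^0$ computation below gives $F(\lambda_0,0)=0$, while $\partial_\ell F(\lambda_0,0)=(T-t)\lambda_0\neq0$ for $t<T$ (with $\lambda_0\neq0$); the implicit function theorem — applied coefficient-by-coefficient in $\Rb[[\eps]]$ (equivalently, taking the formal square root of a power series whose constant term $\lambda_0^2$ is a nonvanishing function), or applied to the genuine functions when the coefficients of $\Act^\eps,\Bc^\eps,\Ach^\eps$ are analytic — then produces the $\Lam_i$ uniquely and identifies the positive branch as the relevant one.

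Next I would insert $p^\eps=\sum_k\eps^kp_k$, $\psi^\eps=\sum_i\eps^i\psi_i$ and $\Lam^\eps=\sum_i\eps^i\Lam_i$ into \eqref{eq:Lambda-eps} and equate powers of $\eps$, obtaining the cascade already displayed above: at $\Oc(1)$, $\gamma\nu p_0+\psi_0=-\tfrac12(T-t)\Lam_0^2$, and at $\Oc(\eps^k)$ for $k\ge1$ a relation that is linear in the unknown $\Lam_k$ — the coefficient of $\Lam_k$ being $(T-t)\Lam_0\neq0$ — with the remaining terms depending only on $p_k,\psi_k$ and the already-determined $\Lam_0,\dots,\Lam_{k-1}$. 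Solving each such relation for $\Lam_k$ gives the recursion recorded just before the statement.

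The only genuine computation is at order $\eps^0$. Here I would use $p_0(t)=p^{BS}(t)$ from Proposition \ref{prop:approx p} together with $\psi_0(t)=-(\tfrac12\lam^2)_0(T-t)-\gamma\nu\,p^{BS}(t)$ from Proposition \ref{prop:approx psi}; the two option-price contributions cancel, leaving $\gamma\nu p_0+\psi_0=-(\tfrac12\lam^2)_0(T-t)=-\tfrac12\lambda_0^2(T-t)$, so that $\Lam_0^2=\lambda_0^2$ and, taking the positive root, $\Lam_0=\lambda_0$. Feeding $\Lam_0=\lambda_0$ into the $k=1$ relation — whose lower-order quadratic contribution is an empty sum — produces the stated $\Lam_1$ in terms of $\gamma\nu p_1+\psi_1$, and feeding $\Lam_0$ and $\Lam_1$ into the $k=2$ relation — whose lower-order quadratic contribution reduces to $\Lam_1^2$ — produces the stated $\Lam_2$; in both cases $p_1,p_2,\psi_1,\psi_2$ are taken verbatim from Propositions \ref{prop:approx p} and \ref{prop:approx psi}. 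Setting $\eps=1$ and truncating at second order then yields $\bar{\Lam}_2=\Lam_0+\Lam_1+\Lam_2$.

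I do not anticipate a serious obstacle: the argument is power-series bookkeeping on top of the two preceding propositions. The one point that deserves a sentence of care is the well-posedness of the $\eps$-expansion of $\Lam^\eps$ — equivalently the non-degeneracy $\partial_\ell F(\lambda_0,0)=(T-t)\lambda_0\neq0$ together with the sign of $\lambda_0$, which is what singles out the positive branch and lets the expansion (or the implicit function theorem) run. The other thing to make sure of is the cancellation of $-\gamma\nu\,p^{BS}$ in $\psi_0$ against $\gamma\nu p_0$, which is precisely the mechanism that collapses $\Lam_0$ to the instantaneous Sharpe ratio evaluated at the expansion point $(\bar x,\bar y)$.
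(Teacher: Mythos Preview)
Your proposal is correct and follows essentially the same approach as the paper: the paper simply inserts the $\eps$-expansions of $p^\eps$, $\psi^\eps$, $\Lam^\eps$ into \eqref{eq:Lambda-eps}, collects like powers of $\eps$, and solves the resulting linear relations for $\Lam_k$, with the $\Oc(1)$ identification $\Lam_0=\lambda_0$ coming from exactly the cancellation $\gamma\nu p_0+\psi_0=-(\tfrac12\lam^2)_0(T-t)$ that you spell out. If anything, your version is slightly more careful, since you add the implicit-function-theorem justification for the existence of the expansion and flag the positivity of $\lambda_0$ needed to select the correct branch, points the paper leaves implicit.
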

The above result provides an approximate relationship between $\Lambda \approx \bar{\Lam}_2$, the risk-aversion parameter $\gamma$ and the instantaneous Sharpe ratio $\lambda.$ 
\begin{remark}
If we set $\Omega \equiv 0$ in \eqref{eq:likelihood ratio pricing}, the pricing measure $\Pbh$ corresponds to the \textit{minimal martingale measure} $\Pbt$ as defined in \cite{follmer1991hedging}
\eqstar{
\frac{ \dd \Pbt }{ \dd \Pb } 	&:=	\exp\left( -\frac{1}{2} \int_0^T \frac{\mu^2(X_t,Y_t)}{\sig^2(X_t,Y_t)} \dd t  - \int_0^T \frac{\mu(X_t,Y_t)}{\sig (X_t,Y_t)} \dd B_t^X \right).
}
Moreover, if the market's chosen pricing measure is the minimal marrtingale measure ($\Pbh = \Pbt$), then the operators $\Ach, \Pch, \Xch, \Ych,  \Gch$ defined in Section \ref{sec:asymptotics p} become identical to the operators $\Act, \Pct, \Xct, \Yct,  \Gct,$ respectively, as defined in Section \ref{sec:asymptotics psi}. With $\Pbt$ as the chosen pricing measure, the first and second order correction terms in the approximation of $\Lambda$ simplify to: 
\eqlnostar{eq:imp sharpe first}{
\Lambda_1 =& \frac{1}{2\Lambda_0(T-t)} \int^T_t \dd t_1 (\tfrac{1}{2} \lam^2)_1 (\Xct(t,t_1), \Yct(t,t_1) ),\\
\label{eq:imp sharpe second}
\Lambda_2 =&  \frac{1}{2(T-t)\Lam_0}\Biggl( \int^T_t \dd t_1 (\tfrac{1}{2}\lam^2)_2 (\Xct(t,t_1),\Yct(t,t_1))  + \int^T_t  \dd t_1 \Gct_1(t,t_1) \int^T_{t_1} \dd t_2(\tfrac{1}{2}\lam^2)_1  \bigl( \Xct(t,t_2),\Yct(t,t_2) \bigr)\nonumber\\
&- (1-\rho^2) (\tfrac{1}{2}\beta^2)_0 \Bigl[ (\tfrac{1}{2}\lam^2)^2_{0,1} \tfrac{1}{3}(T-t)^3 + 2 \gamma \nu \int^T_t \dd t_1(\tfrac{1}{2}\lam^2)_{0,1}  (T-t_1) \partial_y \Bigl( \int^T_{t_1} \dd t_2 \Gct_1(t,t_2) \Bigr)  p^{BS}(t) \nonumber\\
& + \gamma^2 \nu^2 \frac{(\tfrac{1}{2}\sigma^2)^2_{0,1}}{2 \pi \sigma^2_0} \int^T_t \dd t_1 \frac{(T-t_1)^{3/2}}{\sqrt{T-t+t_1 -t}} \exp \Bigl(2 k -  \frac{\bigl( (k-x) + \tfrac{1}{2}\sigma^2_0(T-t) \bigr)^2 }{\sigma^2_0 (T-t+t_1-t)}\Bigr) \Bigr] - (T-t)\frac{ \Lambda_1^2}{2}   \Biggr) ,
}
and the zeroth order term $\Lam_0$ remains unaffected.
\end{remark}
\section{Examples}
\label{sec:examples}
In this section, we consider different local stochastic volatility models which provide different functional forms of $\lambda.$ Using our approximation result in Proposition \ref{prop:approx imp Sharpe}, we discuss its practical implications. For the purpose of simplification of presentation, we will assume that the pricing measure $\Pbh$ corresponds to the minimal martingale measure $\Pbt.$
\subsection{Heston model}
\label{sec:heston}
We first consider the famous Heston's stochastic volatility model, which under the physical measure $\Pb,$ is given as 
\eqstar{
\dd X_t &= \left( \lambda(X_t,Y_t) \sqrt{Y_t} - \tfrac{1}{2}Y_t \right) \dd t + \sqrt{Y_t} \dd B^X_t,\\
\dd Y_t &= \left( \kappa(\theta - Y_t) + \rho \delta \lambda(X_t,Y_t) \sqrt{Y_t} \right) \dd t +\delta \sqrt{Y_t}\left( \rho  \dd B^X_t + \sqrt{1-\rho^2_t} \dd B^Y_t  \right).
}
Comparing the above model with our formulation in \eqref{eq:main sde1}, we have 
\eqstar{
&\mu(x,y) = \lambda(x,y) \sqrt{y}, &\sigma(y) = \sqrt{y},\\
&c(x,y) = \kappa (\theta - y) + \rho \delta \lambda(x,y) \sqrt{y}, & \beta(y) = \delta \sqrt{y}.
}

We suppose the following form of $\lambda(x,y) = -\frac{\sqrt{y}}{2} + \frac{\sqrt{\theta}}{3}.$ This particular choice keeps the model in the affine class (see \cite{duffee2002term}). Other forms of the function $\lambda(x,y)$ can also be considered. For the chosen $\lambda$, we plot the second order approximation $\sum^2_{i=0}\Lambda_i$ of the implied Sharpe ratio in Figure \ref{fig:gamma heston} with respect to the risk-aversion parameter $\gamma$ for $\nu = \pm1, \pm2, \pm3,\pm4.$ We observe that including European options in the investment portfolio increases the implied Sharpe ratio and its impact is greater for an investor with a higher value of $\gamma$. Thus, a risk-averse investor is better off including an European option in his portfolio when considering to invest using the utility maximisation approach. 
\begin{figure}[H]
\centering
\subfigure[]{\includegraphics[scale=0.4]{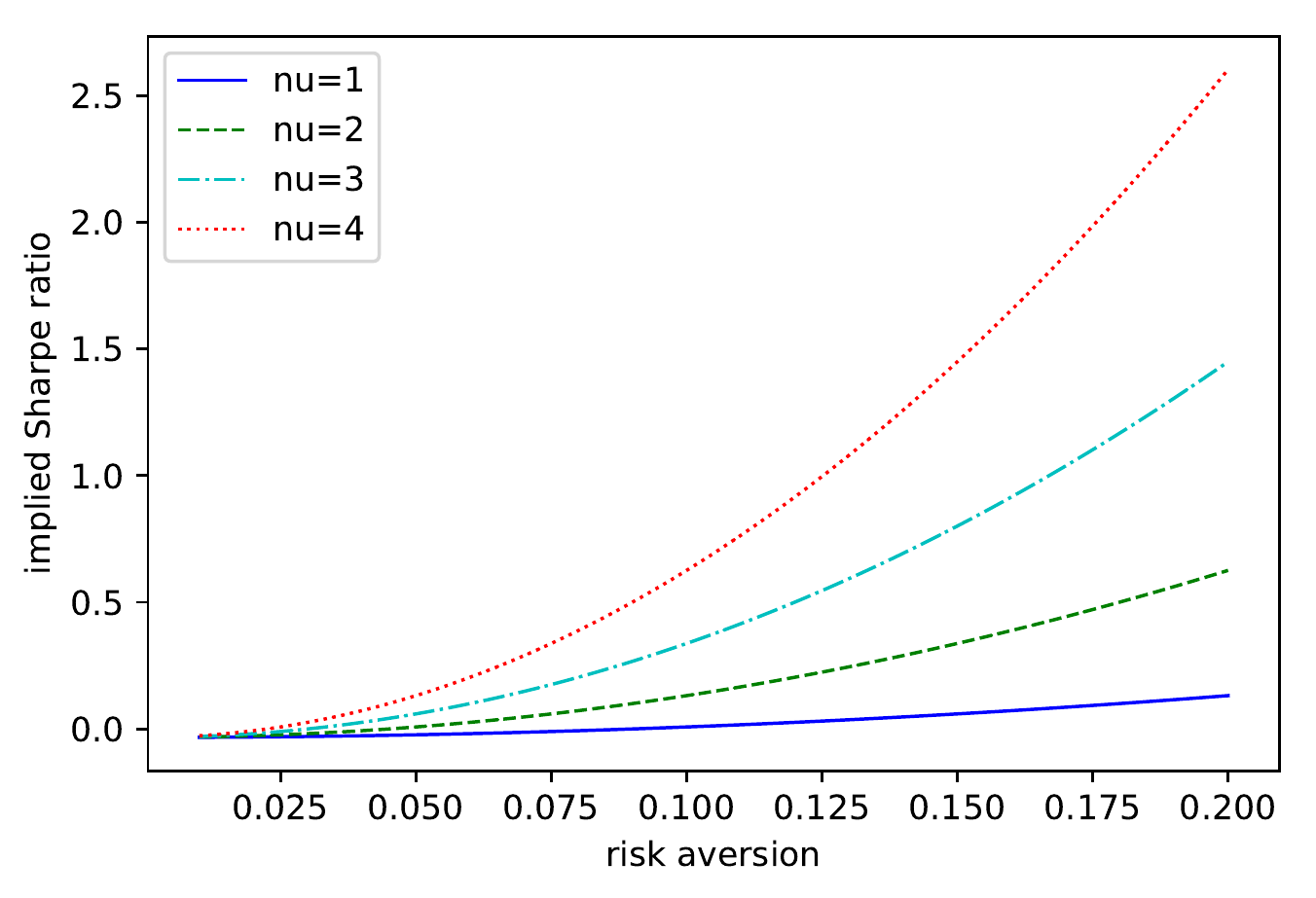}}
~
\subfigure[]{\includegraphics[scale=0.4]{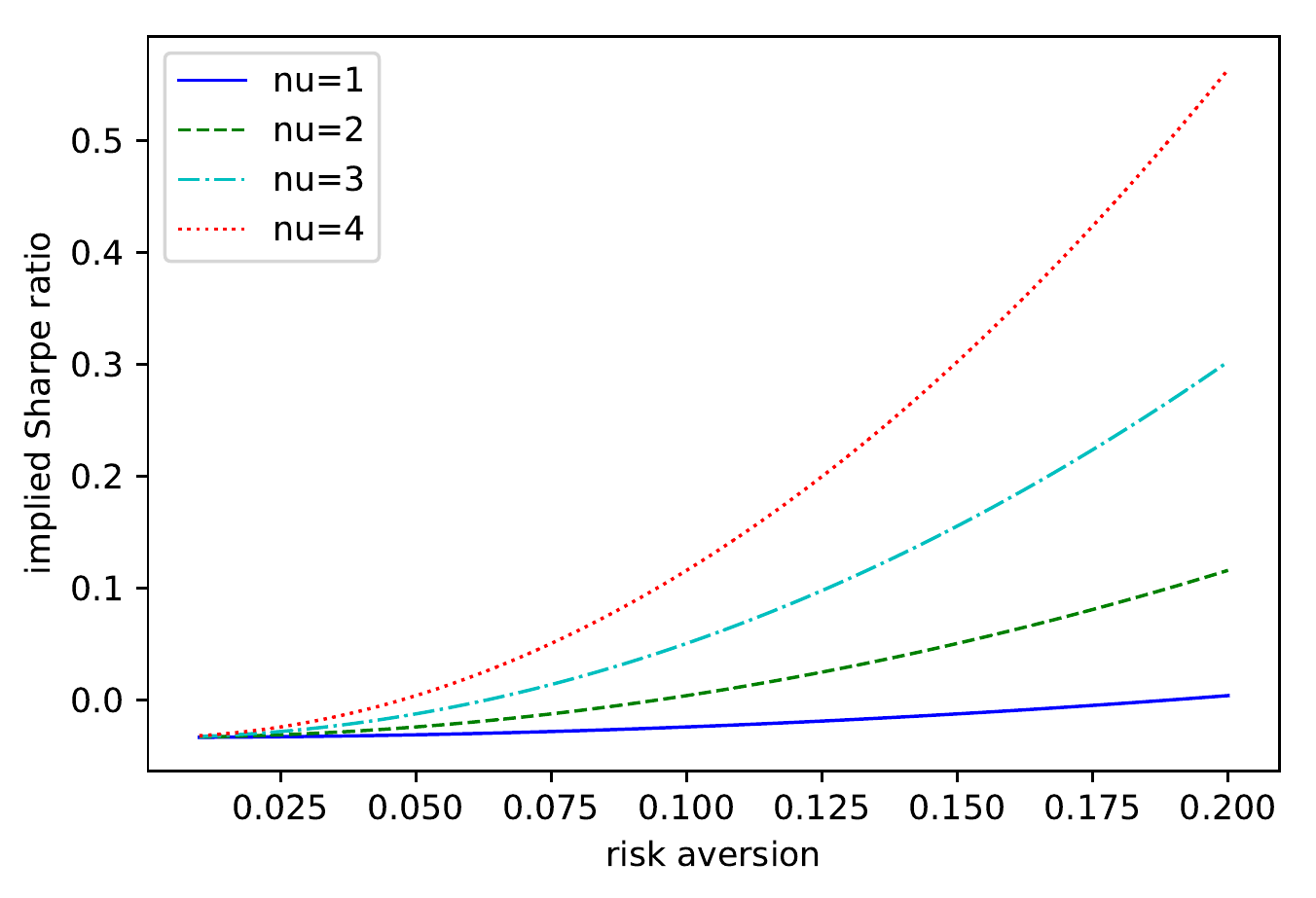}}
~
\subfigure[]{\includegraphics[scale=0.4]{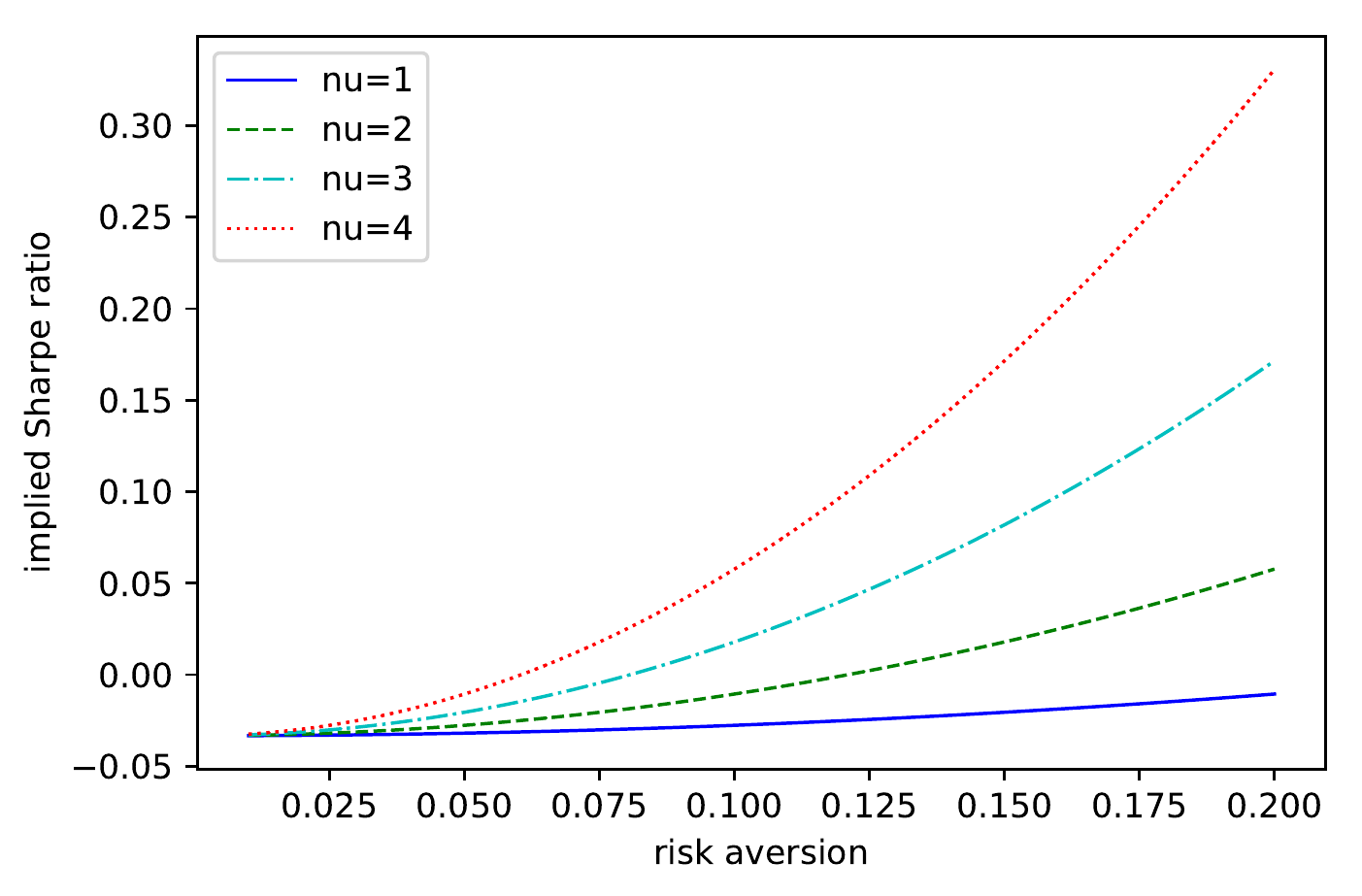}}
~
\caption{{ \small Implied Sharpe ratio for different values of log price (a) $x=\log(100)$ (b) $x=\log(110)$ (c) $x=\log(90).$ The parameter values used are $k=\log(100), t=0,T=6/52, \delta = 0.2, \theta = 0.04, \kappa = 1.15, \rho = -0.4,  \bar{x} = x, \bar{y} = \theta,$ and $y = \bar{y}.$}}
\label{fig:gamma heston}
\end{figure} 
\begin{figure}[H]
\centering
\subfigure[]{\includegraphics[scale=0.4]{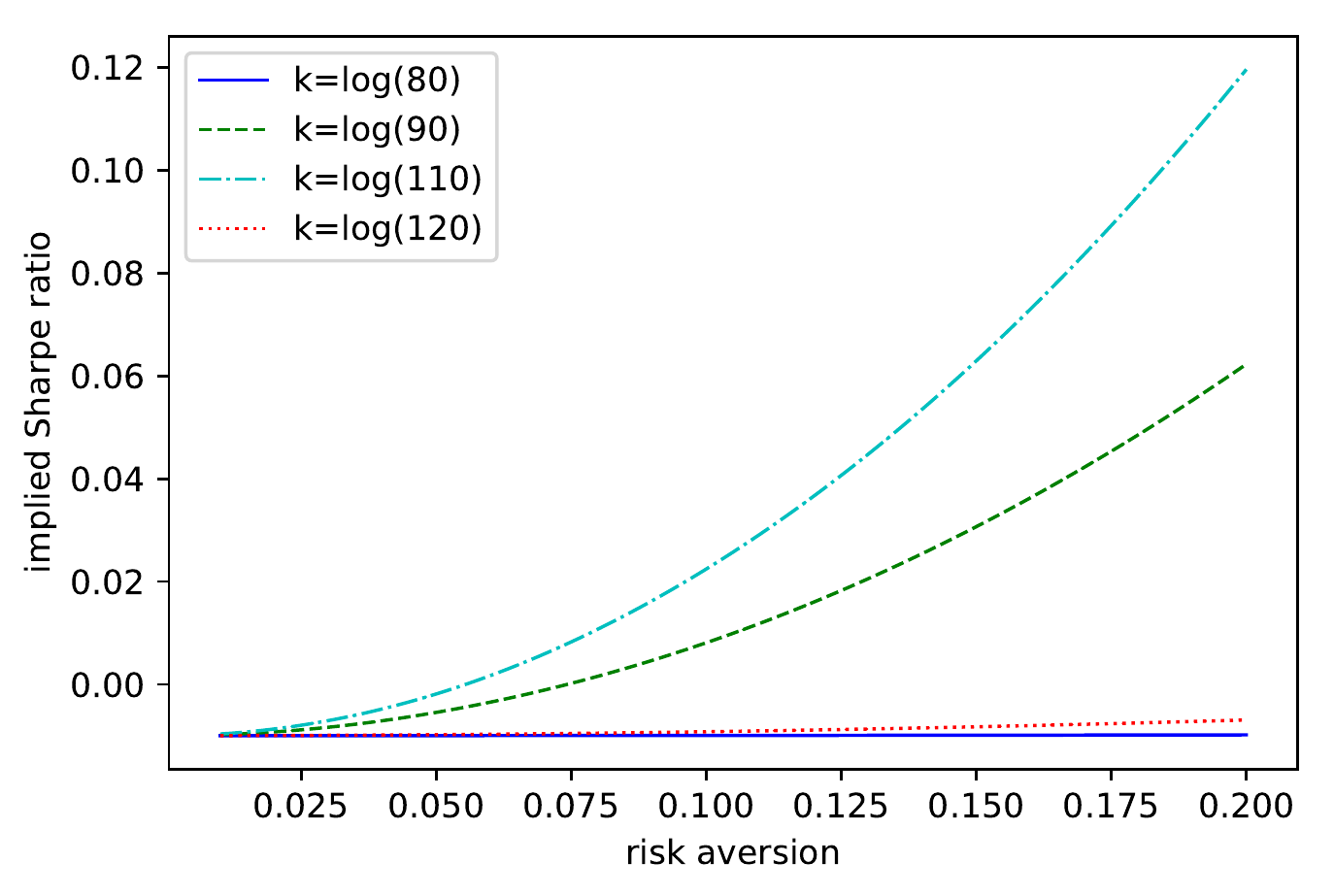}}
~
\subfigure[]{\includegraphics[scale=0.4]{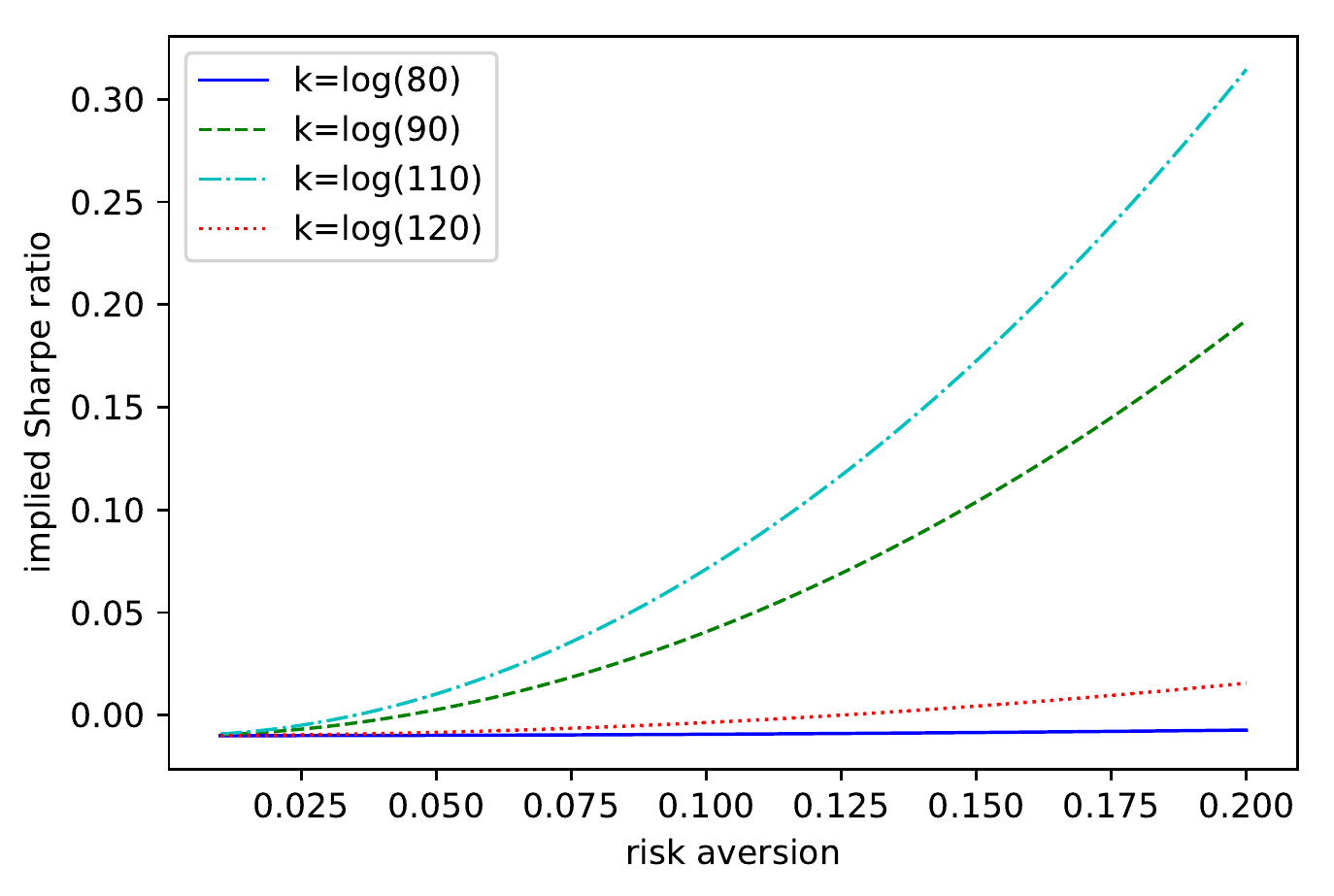}}
~
\subfigure[]{\includegraphics[scale=0.4]{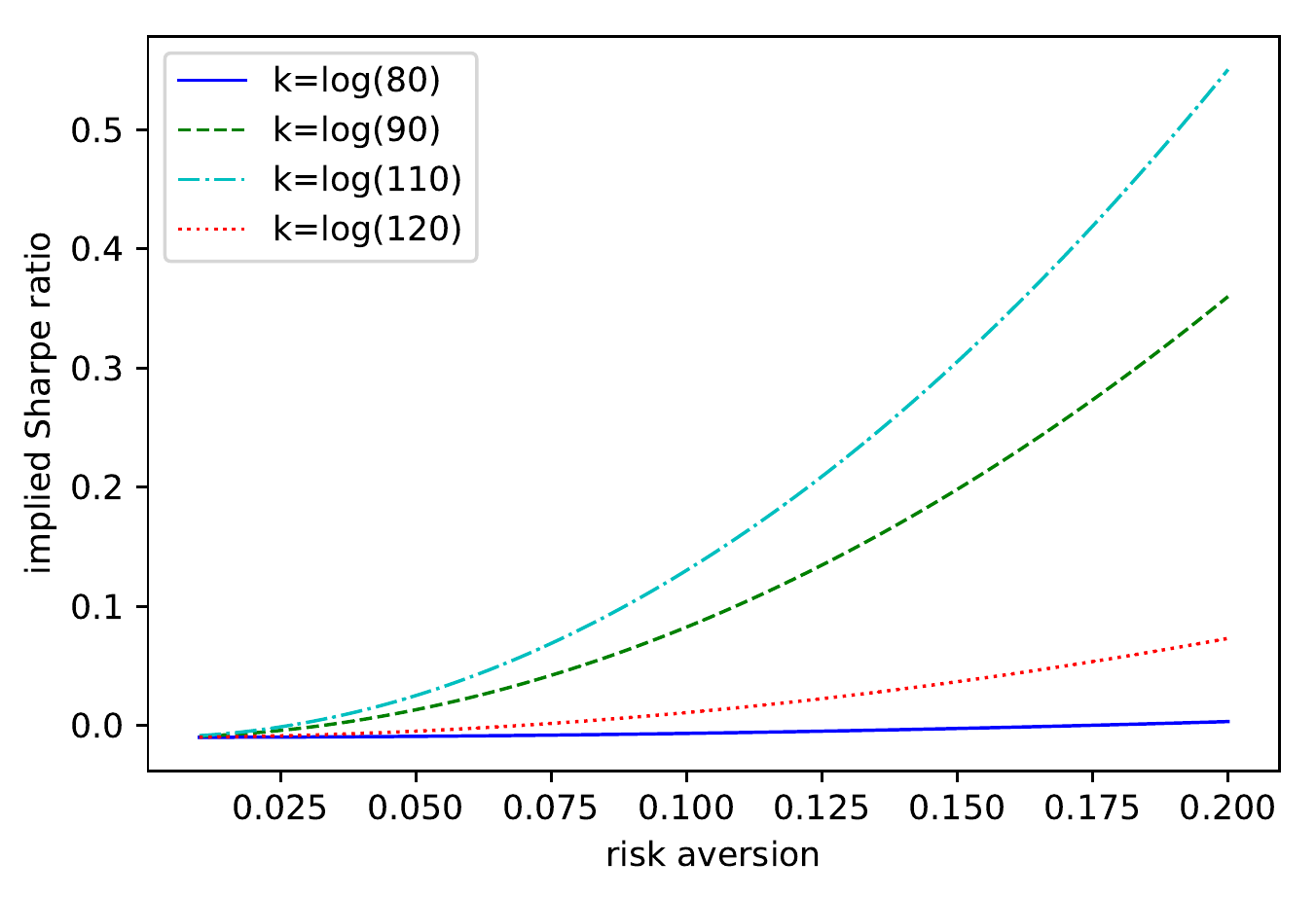}}
~
\caption{{\small Relationship of the implied Sharpe ratio with respect to log-strike (a) $T=6/52$ (b) $T=9/52$ (c) $T=12/52.$ The parameter values used are $t=0, \nu = 1, \delta = 0.2,x=\log(10), \bar{x} = x, \theta = 0.04, \kappa = 1.15, \rho = -0.4, \bar{y} = \theta,$ and $y = \bar{y}.$}}
\label{fig:strike heston}
\end{figure} 
To compare different European options we plot the second order approximation of the implied Sharpe ratio with respect to the risk-aversion parameter $\gamma$ for different values of log-strike $k$ and maturity $T$ in Figure \ref{fig:strike heston} and Figure \ref{fig:maturity heston}, respectively. In Figure \ref{fig:strike heston}, we observe that for a fixed maturity and the chosen parameter values, near-the-money European call options provide better implied Sharpe ratio than the far-from-the-money European options. Moreover, this impact is more pronounced for an investor with higher value of $\gamma.$ Thus, a risk-averse investor should consider to include near-the-money European options in the investment portfolio.
\begin{figure}[htbp]
\centering
\subfigure[]{\includegraphics[scale=0.4]{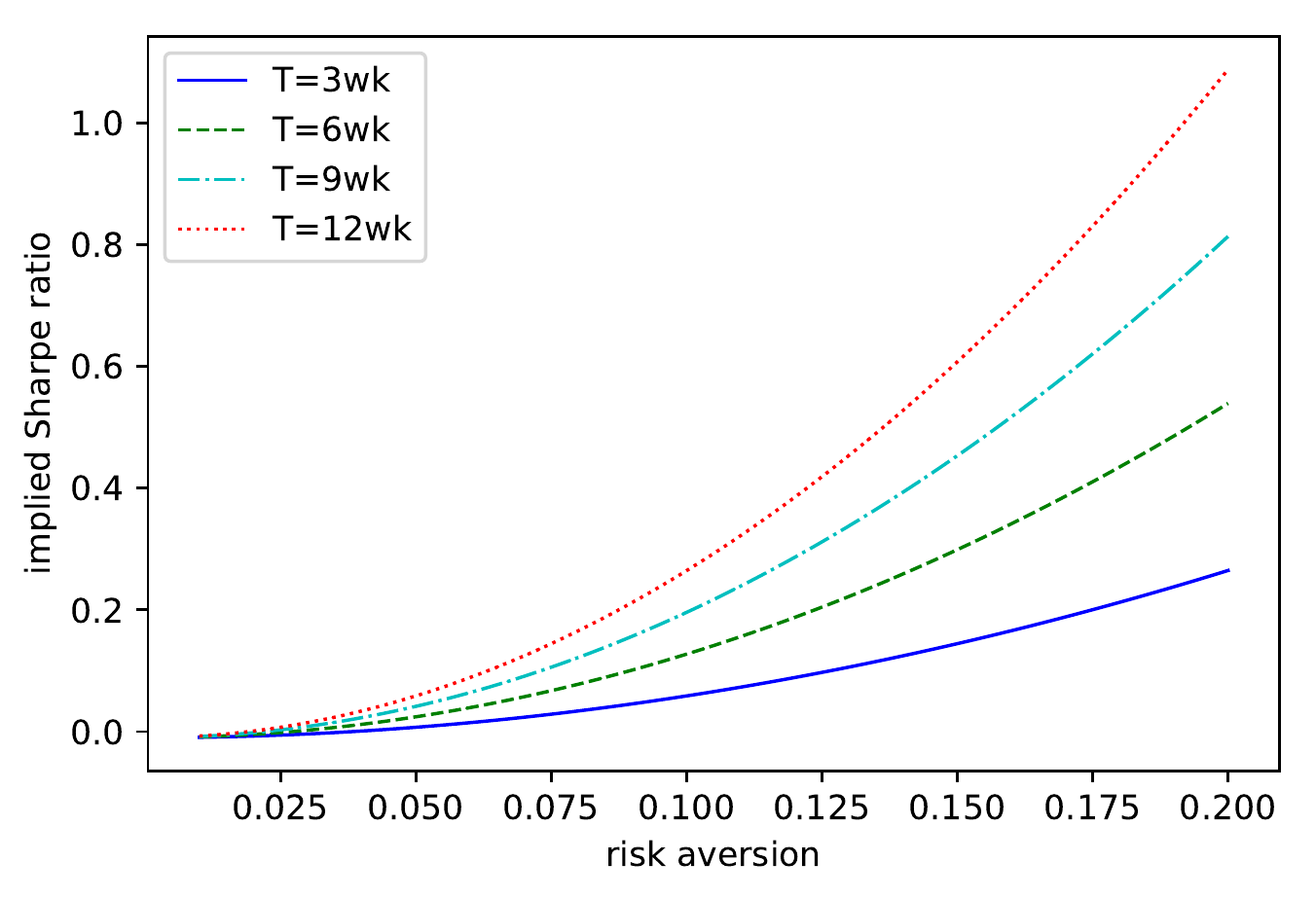}}
~
\subfigure[]{\includegraphics[scale=0.4]{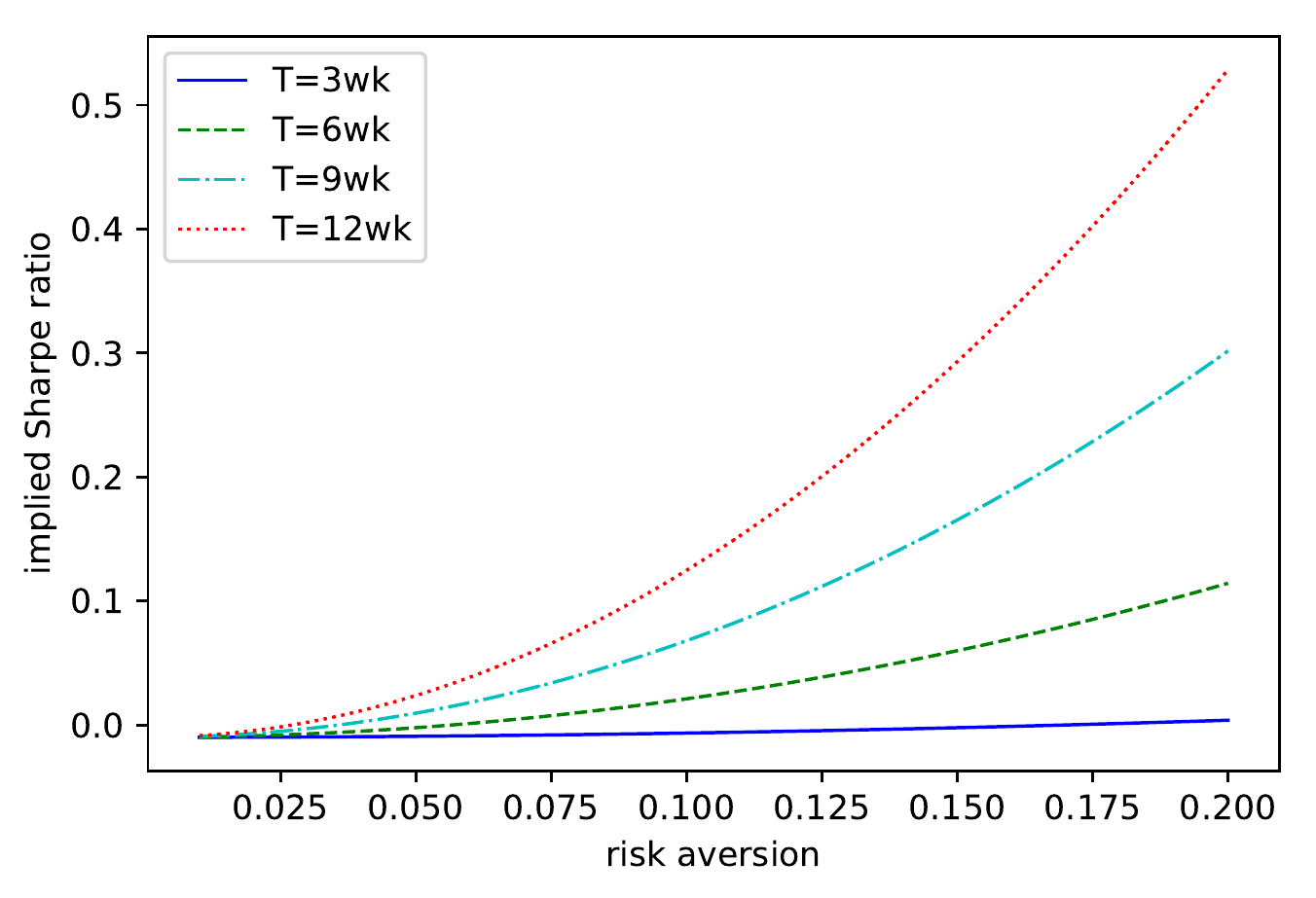}}
~
\subfigure[]{\includegraphics[scale=0.4]{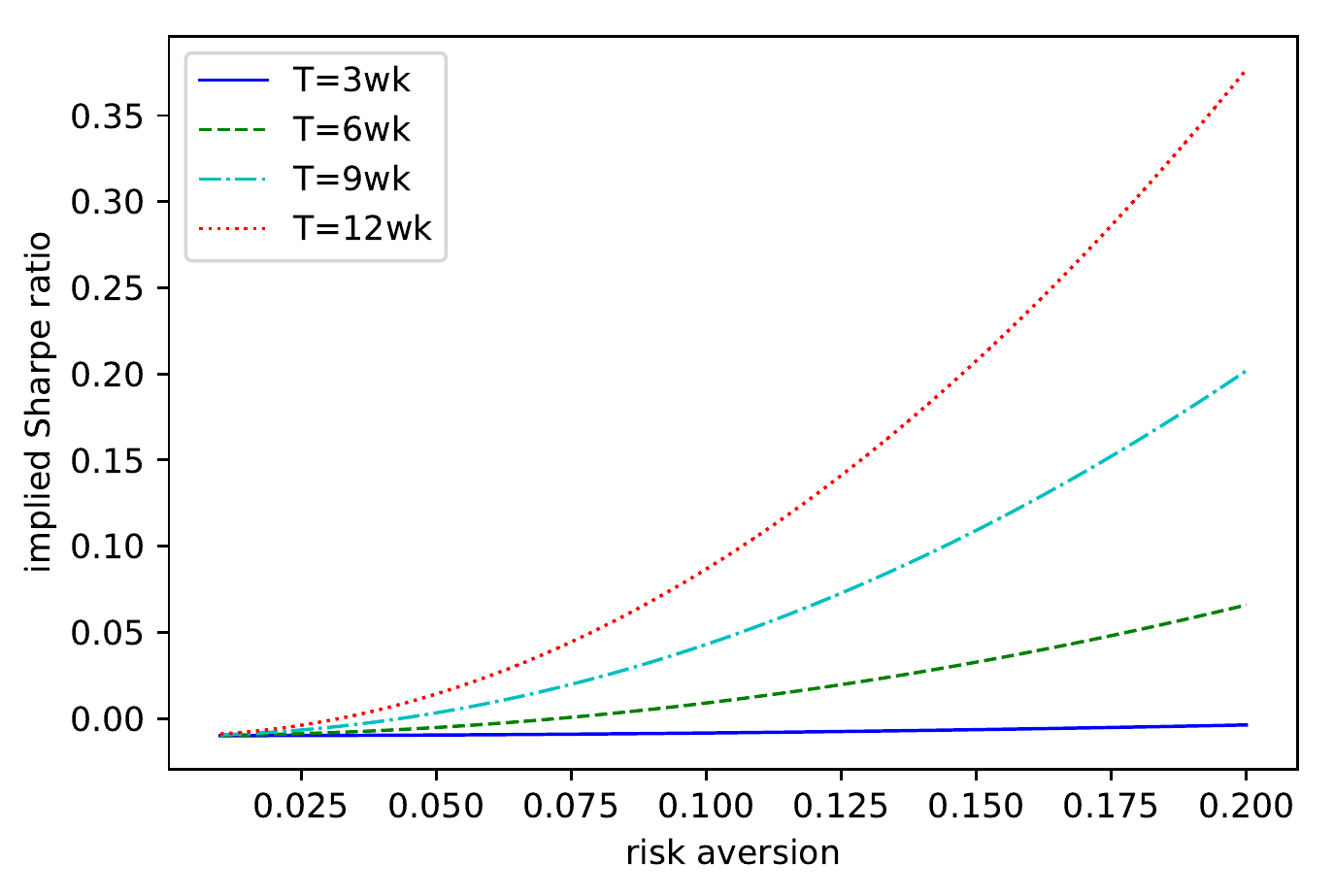}}
~
\caption{{\small Relationship of the implied Sharpe ratio with respect to maturity (a) $k=\log(100),x=\log(100)$ (b) $k=\log(100),x=\log(110)$ (c) $k=\log(100),x=\log(90).$ The parameter values used are $t=0, \delta = 0.2, \nu = 1, \theta = 0.04, \kappa = 1.15, \rho = -0.4,  \bar{x} = x, \bar{y} = \theta,$ and $y = \bar{y}.$}}
\label{fig:maturity heston}
\end{figure} 
In Figure \ref{fig:maturity heston}, we observe that for the chosen parameter values, the investor's implied Sharpe ratio increases with increasing maturity irrespective of the moneyness of the European call option, with an increasing impact for higher values of $\gamma$. Thus, under the Heston model with the chosen form of $\lambda,$ the risk-averse investor should consider to include European options with longer time-to-maturity over options with shorter time-to-maturity. 

\subsection{Reciprocal Heston model}
\label{sec:reciprocal heston}
We consider another stochastic volatility model, which under the physical measure $\Pb$ is given as 
\eqstar{
\dd X_t &= \left( \mu - \tfrac{1}{2}Y_t \right) \dd t + \sqrt{Y_t} \dd B^X_t,\\
\dd Y_t &= \left( a Y_t +\frac{2 (b^2 - a \kappa)}{\mu^2 ( 1- \rho)^2} Y_t^2 \right) \dd t - \left(\frac{2}{1- \rho^2	}\right)^{1/2} \frac{b}{\mu} Y_t^{3/2} \left( \rho  \dd B^X_t + \sqrt{1-\rho^2_t} \dd B^Y_t  \right).
}
The above model is referred as \textit{reciprocal Heston model} as $Y$ is the reciprocal of a CIR process. Comparing the above model with our formulation in \eqref{eq:main sde1}, we get that 
\eqstar{
&\mu(y) =\mu,& &\sigma(y) = \sqrt{y},&\\
&c(y) = a y + \frac{2 (b^2 - a \kappa)}{\mu^2 ( 1- \rho)^2} y^2,& & \beta(y) = - \left(\frac{2}{1- \rho^2	}\right)^{1/2} \frac{b}{\mu} y^{3/2},&
}
where $(a,b,\kappa)$ must satisfy the usual Feller condition: $2 a \kappa  \geq b^2.$ This model choice leads to the functional choice of $\lambda(x,y) = \frac{\mu}{\sqrt{y}}$ which is different from the choice of $\lambda$ in Section \ref{sec:heston}. Once again we plot the second order approximation of the implied Sharpe ratio with respect to the risk-aversion parameter $\gamma$ in Figure \ref{fig:gamma rec heston}. We observe that in the reciprocal Heston model, the implied Sharpe ratio of a risk-averse investor increases by including the European option in the investment portfolio. 
\begin{figure}[H]
\centering
\subfigure[]{\includegraphics[scale=0.4]{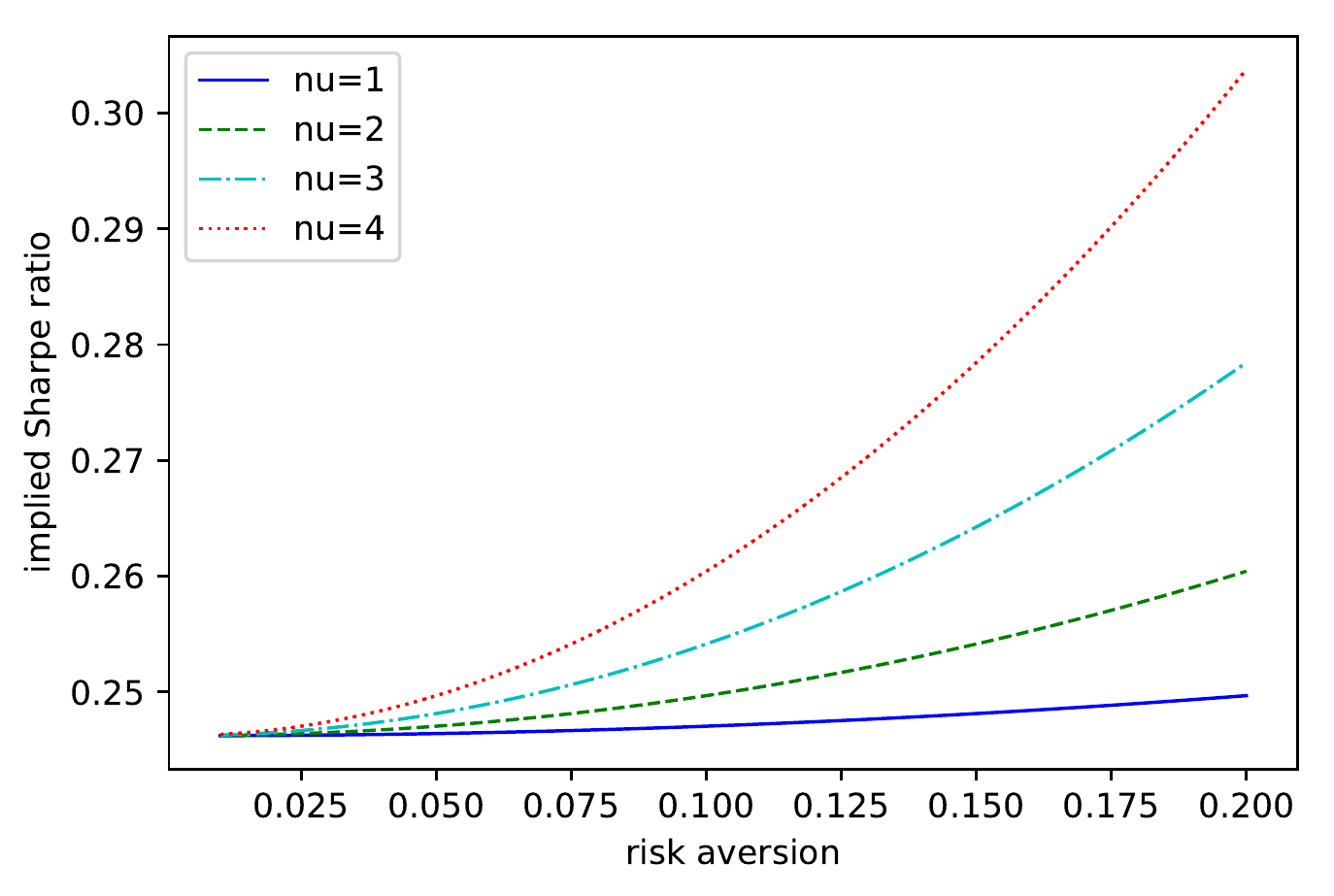}}
~
\subfigure[]{\includegraphics[scale=0.4]{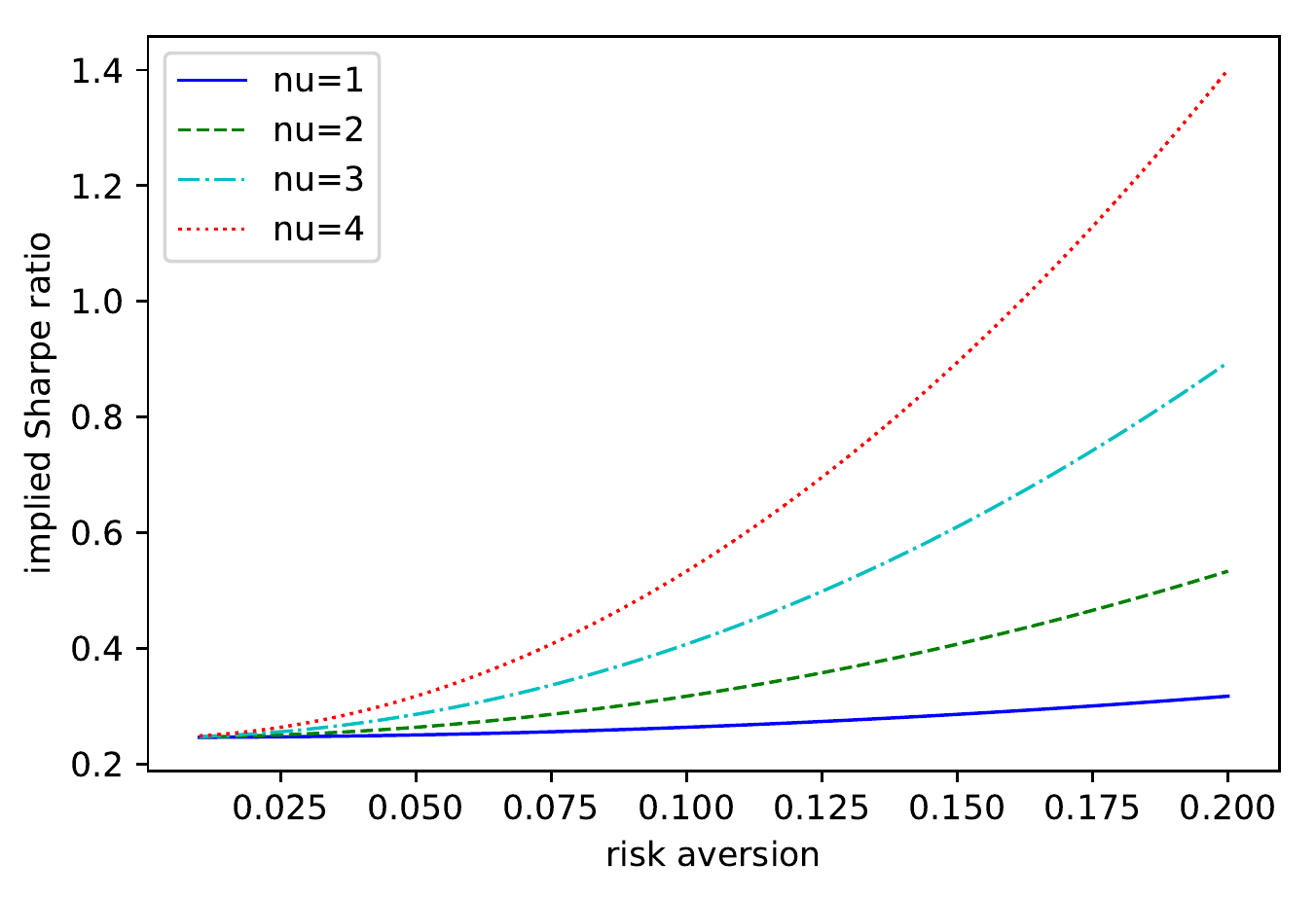}}
~
\subfigure[]{\includegraphics[scale=0.4]{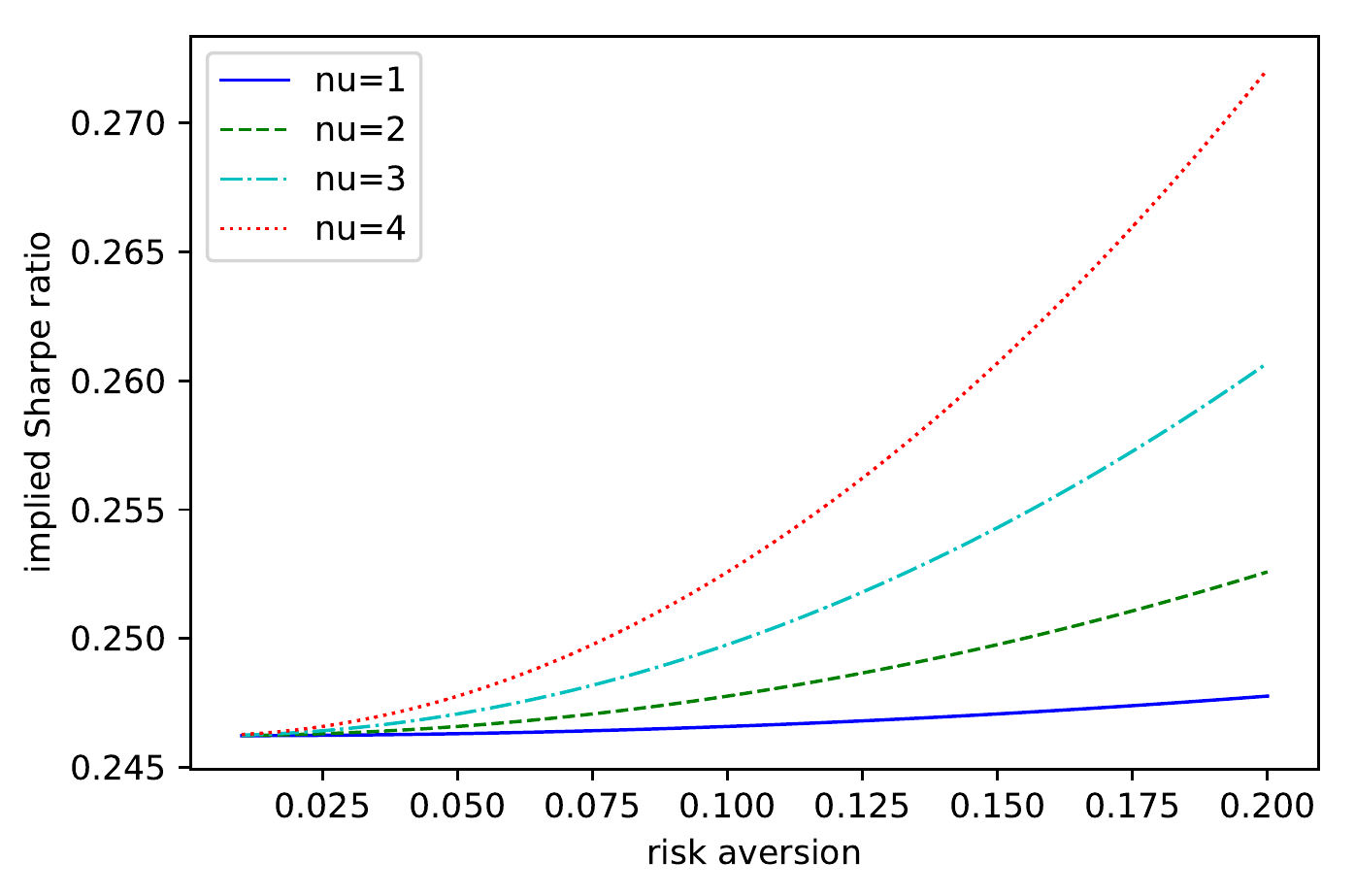}}
~
\caption{{\small Implied Sharpe ratio for different values of log price (a) $k=\log(100),x=\log(110)$ (b) $k=\log(100),x=\log(100)$ (c) $k=\log(100),x=\log(90).$ The parameter values used are $t=0,T=0.25, \mu = 0.05, a = 5.0, b = 0.04, \kappa = 0.01, \rho = 0.2, \bar{y} = 0.04,$ and $y = \bar{y}.$}}
\label{fig:gamma rec heston}
\end{figure}
\begin{figure}[H]
\centering
\subfigure[]{\includegraphics[scale=0.4]{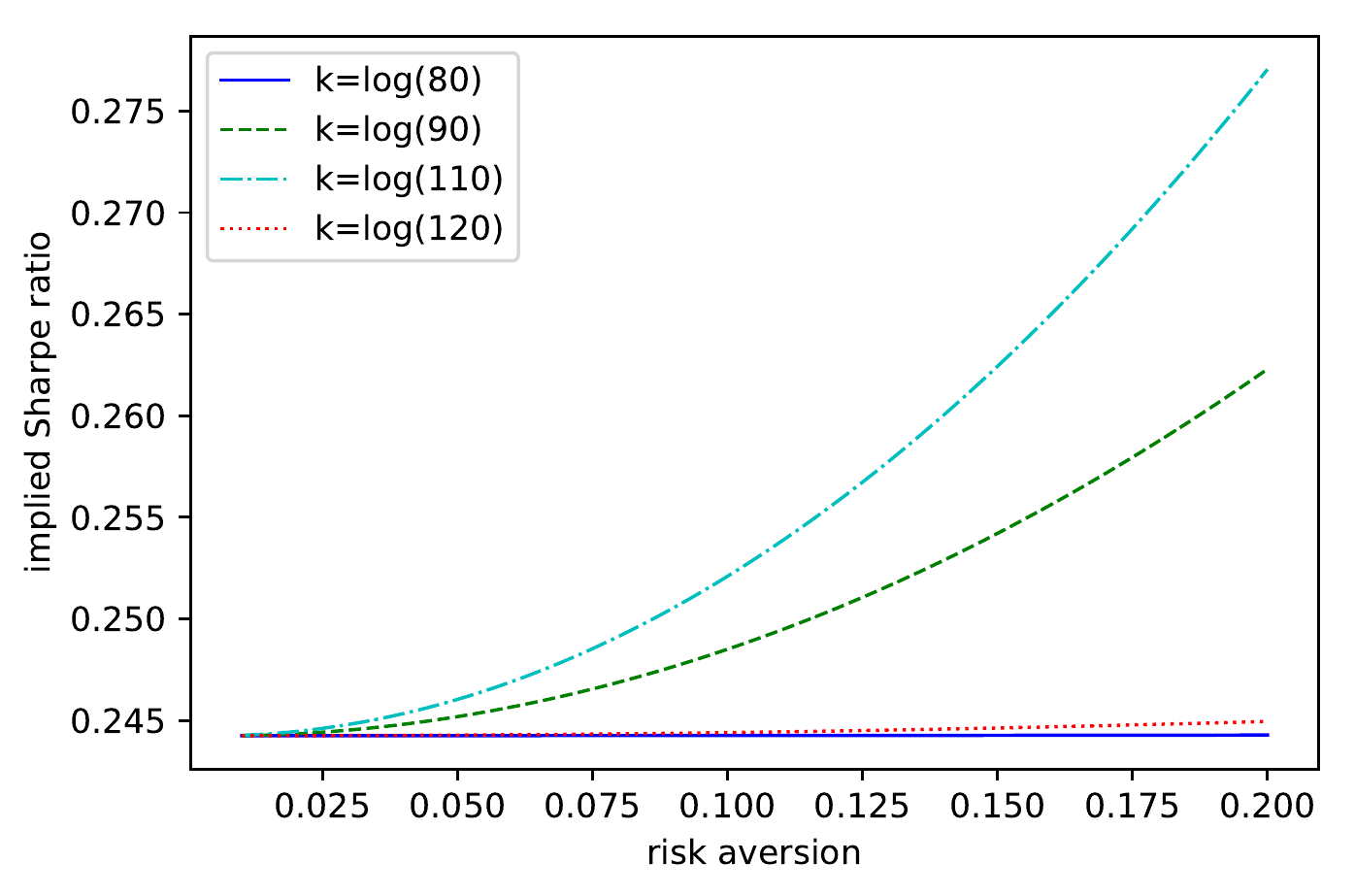}}
~
\subfigure[]{\includegraphics[scale=0.4]{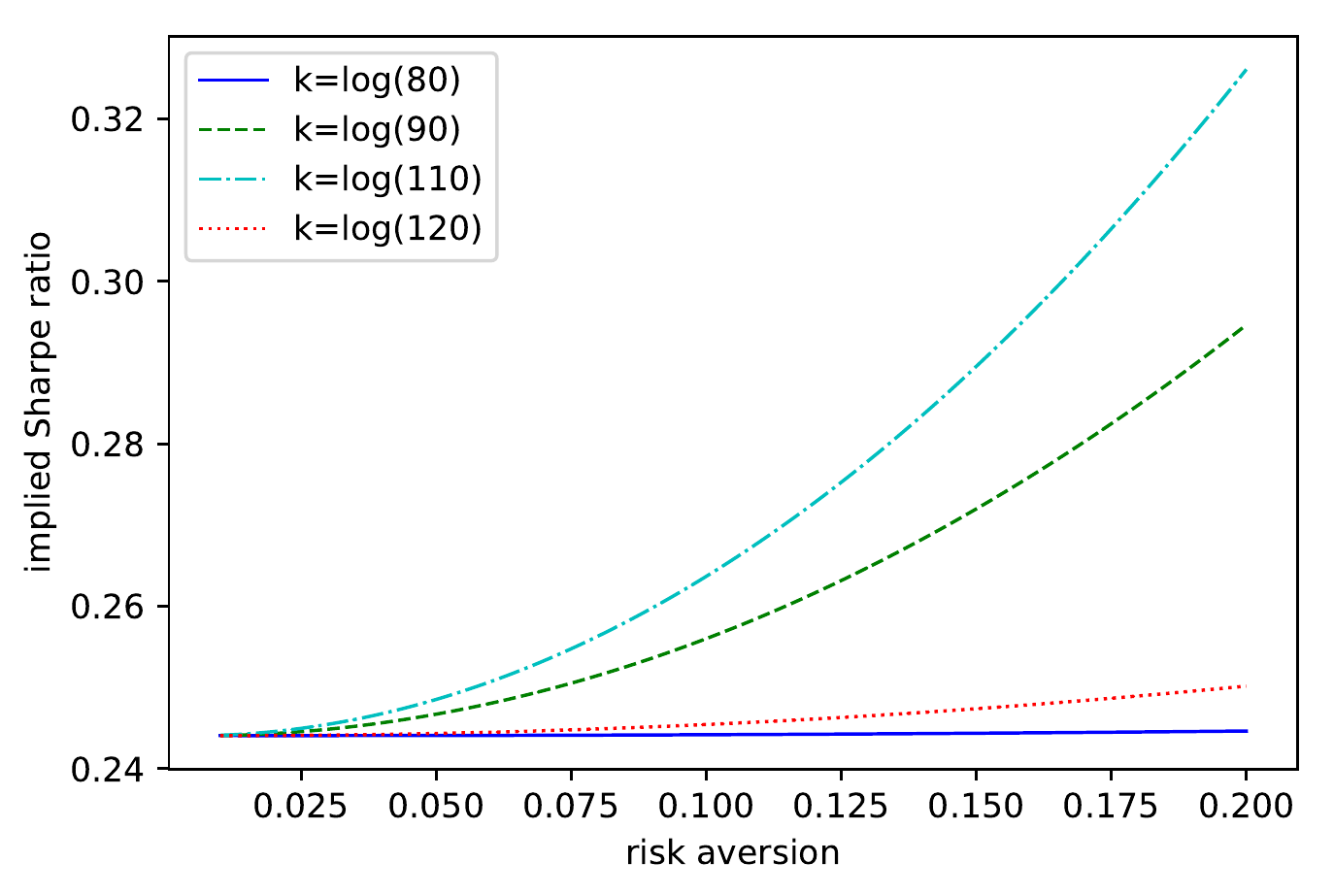}}
~
\subfigure[]{\includegraphics[scale=0.4]{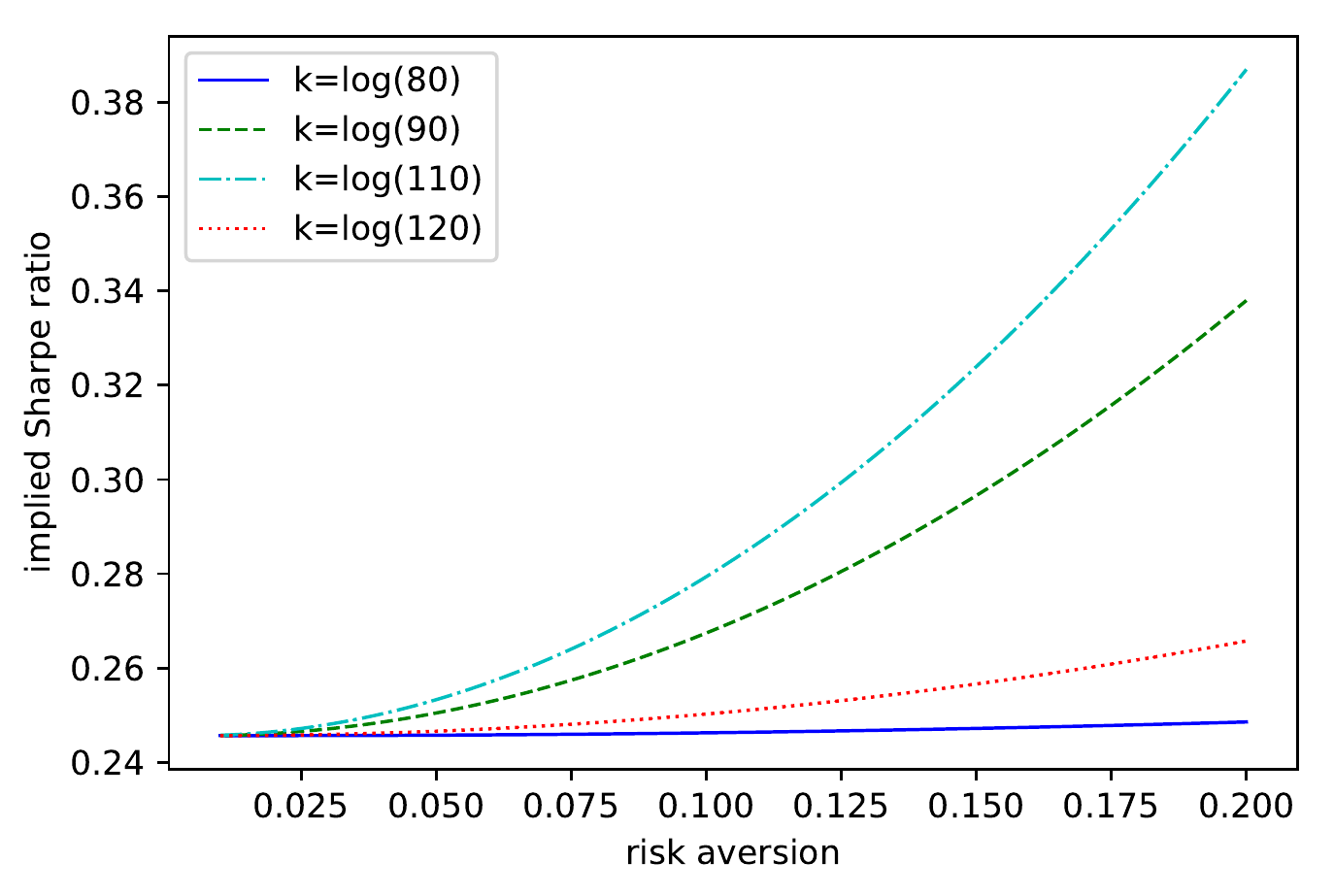}}
~
\caption{{\small Relationship of the implied Sharpe ratio with respect to log-strike (a) $T=6/52$ (b) $T=9/52$ (c) $T=12/52.$ The parameter values used are $t=0, \nu = 1, x=\log(10), \bar{x} = x,\mu = 0.05, a = 5.0, b = 0.04, \kappa = 0.01, \rho = 0.2, \bar{y} = 0.04,$ and $y = \bar{y}.$}}
\label{fig:strike rec heston}
\end{figure} 
We also compare different European options by plotting the second order approximation of the implied Sharpe ratio with respect to the risk-aversion parameter $\gamma$ for different values of log-strike $k$ and maturity $T$ in Figure \ref{fig:strike rec heston} and Figure \ref{fig:maturity rec heston}, respectively. In Figure \ref{fig:strike rec heston}, we observe that for a fixed maturity and the chosen parameter values, far-from-the-money European call options provide lower implied Sharpe ratio compared to the near-the-money options. 
\begin{figure}[htbp]
\centering\subfigure[]{\includegraphics[scale=0.4]{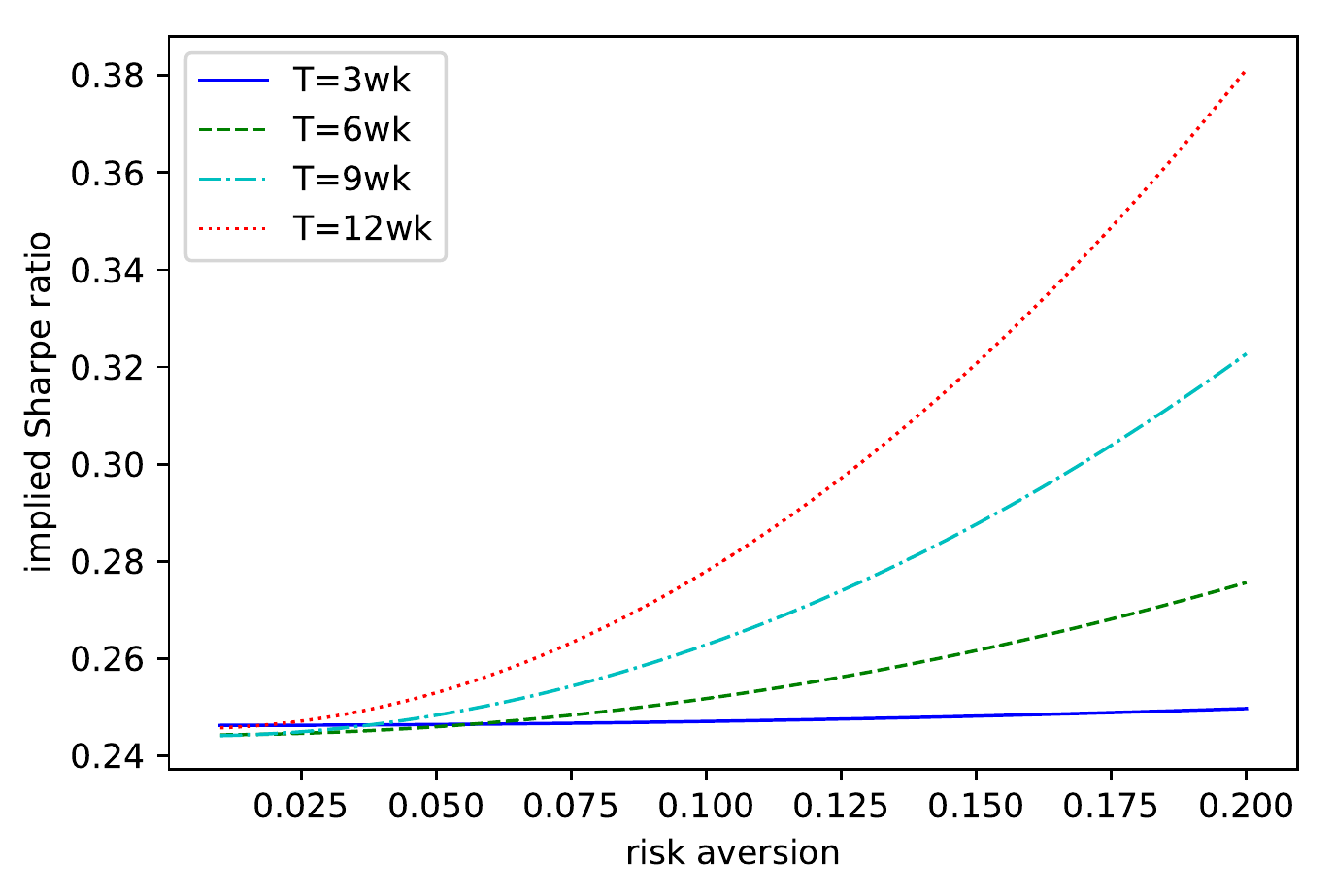}}
~
\subfigure[]{\includegraphics[scale=0.4]{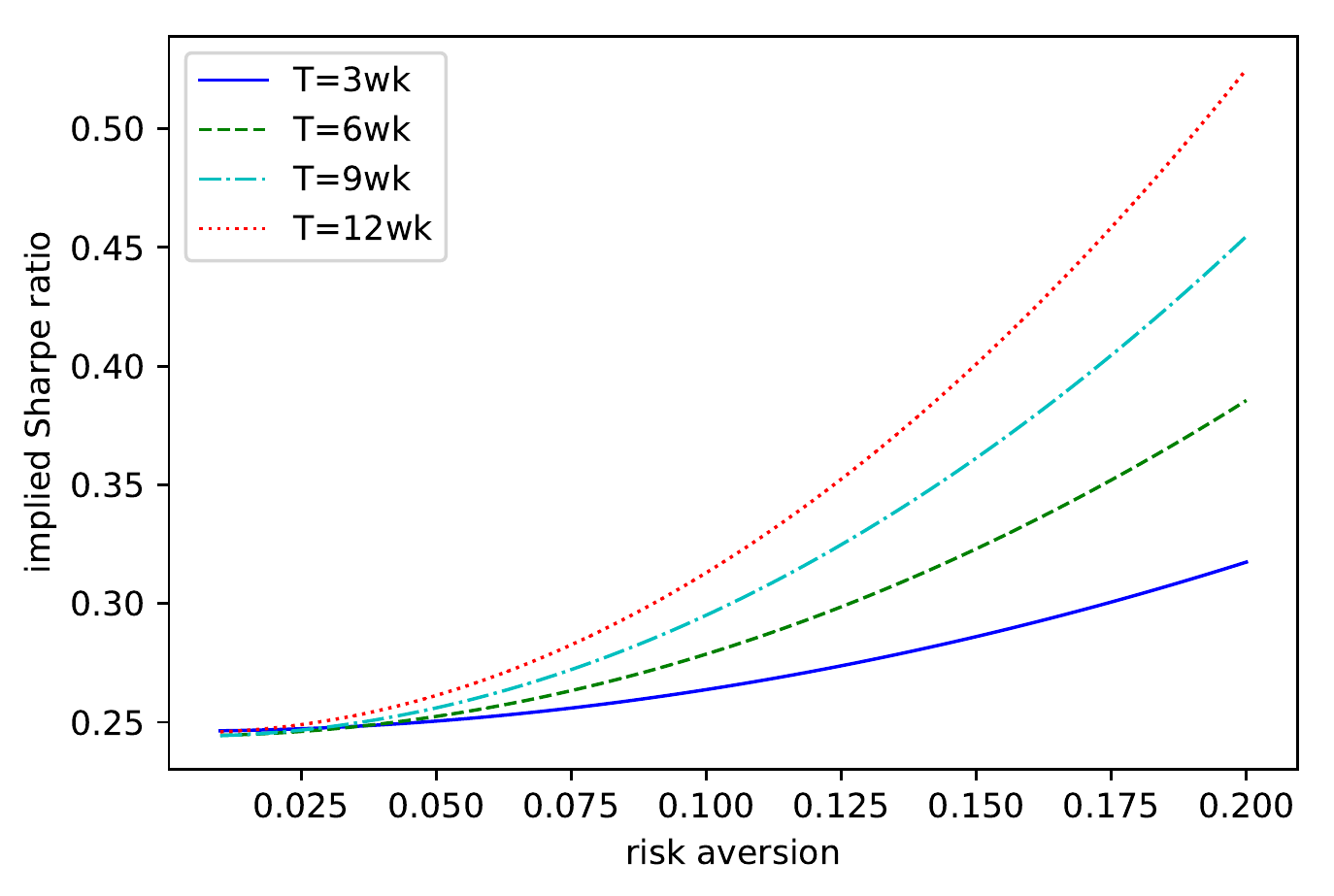}}
~
\subfigure[]{\includegraphics[scale=0.4]{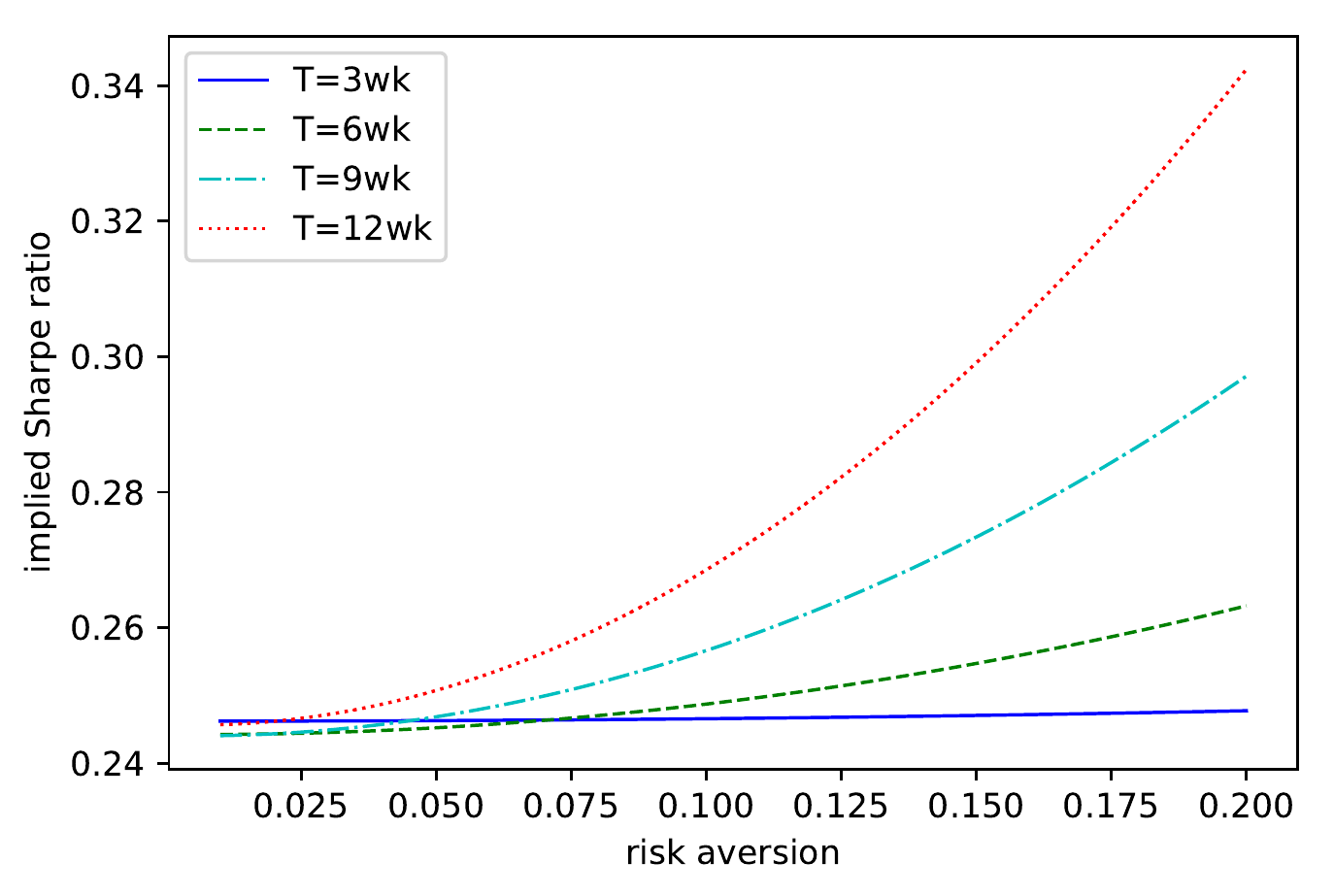}}
~
\caption{{\small Relationship of the implied Sharpe ratio with respect to maturity (a) $k=\log(100),x=\log(110)$ (b) $k=\log(100),x=\log(100)$ (c) $k=\log(100),x=\log(90).$ The parameter values used are  $t=0, \nu = 1, \bar{x} = x,\mu = 0.05, a = 5.0, b = 0.04, \kappa = 0.01, \rho = 0.2, \bar{y} = 0.04,$ and $y = \bar{y}.$}}
\label{fig:maturity rec heston}
\end{figure} 
In Figure \ref{fig:maturity rec heston}, apart from the very small values of the risk-aversion parameter, we observe that the investor's implied Sharpe ratio is higher for longer maturity options. Both the behaviours are similar to the observations under the Heston model in Section \ref{sec:heston}.

\section{Conclusion}
\label{sec:conclusion}
In this work, we introduced a new concept of the implied Sharpe ratio which allows a risk-averse investor to compare the worthiness of different European options for investment. In a general market setting, we prove the existence and uniqueness of the implied Sharpe ratio and derive its asymptotic approximation formulas under general local stochastic volatility models. In two stochastic volatility model settings, we used the implied Sharpe ratio approximation formulas to observe that including a European option in the investment portfolio increases the investor's utility. Moreover, we observed that near-the-money and longer maturity options provide higher implied Sharpe ratio than far-from-the-money and shorter maturity options, respectively. 
\bibliographystyle{chicago}
\bibliography{implied-sharpe}

\begin{thebibliography}{}

\bibitem[\protect\citeauthoryear{Carmona}{Carmona}{2008}]{carmona2008indifference}
Carmona, R. (2008).
\newblock {\em Indifference pricing: theory and applications}.
\newblock Princeton University Press.

\bibitem[\protect\citeauthoryear{Delbaen and Schachermayer}{Delbaen and
  Schachermayer}{2006}]{delbaen2006mathematics}
Delbaen, F. and W.~Schachermayer (2006).
\newblock {\em The mathematics of arbitrage}.
\newblock Springer Science \& Business Media.

\bibitem[\protect\citeauthoryear{Duffee}{Duffee}{2002}]{duffee2002term}
Duffee, G.~R. (2002).
\newblock Term premia and interest rate forecasts in affine models.
\newblock {\em The Journal of Finance\/}~{\em 57\/}(1), 405--443.

\bibitem[\protect\citeauthoryear{Eeckhoudt, Gollier, and Schlesinger}{Eeckhoudt
  et~al.}{1995}]{eeckhoudt1995risk}
Eeckhoudt, L., C.~Gollier, and H.~Schlesinger (1995).
\newblock The risk-averse (and prudent) newsboy.
\newblock {\em Management science\/}~{\em 41\/}(5), 786--794.

\bibitem[\protect\citeauthoryear{Follmer and Schweizer}{Follmer and
  Schweizer}{1991}]{follmer1991hedging}
Follmer, H. and M.~Schweizer (1991).
\newblock Hedging of contingent claims.
\newblock {\em Applied stochastic analysis\/}~{\em 5}, 389.

\bibitem[\protect\citeauthoryear{Friedman}{Friedman}{2008}]{friedman2008partial}
Friedman, A. (2008).
\newblock {\em Partial differential equations of parabolic type}.
\newblock Courier Dover Publications.

\bibitem[\protect\citeauthoryear{Hodges and Neuberger}{Hodges and
  Neuberger}{1989}]{hodges1989optimal}
Hodges, S.~D. and A.~Neuberger (1989).
\newblock Optimal replication of contingent claims under transaction costs.
\newblock {\em The Review of Futures Markets\/}~{\em 8\/}(2), 222--239.

\bibitem[\protect\citeauthoryear{Jensen}{Jensen}{1969}]{jensen1969risk}
Jensen, M.~C. (1969).
\newblock Risk, the pricing of capital assets, and the evaluation of investment
  portfolios.
\newblock {\em Journal of business\/}~{\em 42\/}(2), 167--247.

\bibitem[\protect\citeauthoryear{Jewitt}{Jewitt}{1987}]{jewitt1987risk}
Jewitt, I. (1987).
\newblock Risk aversion and the choice between risky prospects: the
  preservation of comparative statics results.
\newblock {\em The Review of Economic Studies\/}~{\em 54\/}(1), 73--85.

\bibitem[\protect\citeauthoryear{Lorig}{Lorig}{2018}]{lorig-4}
Lorig, M. (2018).
\newblock Indifference prices and implied volatilities.
\newblock {\em Mathematical Finance\/}~{\em 28\/}(1), 372--408.

\bibitem[\protect\citeauthoryear{Lorig, Pagliarani, and Pascucci}{Lorig
  et~al.}{2015}]{lorig2015analytical}
Lorig, M., S.~Pagliarani, and A.~Pascucci (2015).
\newblock Analytical expansions for parabolic equations.
\newblock {\em SIAM Journal on Applied Mathematics\/}~{\em 75\/}(2), 468--491.

\bibitem[\protect\citeauthoryear{Lorig, Pagliarani, and Pascucci}{Lorig
  et~al.}{2017}]{lorig-pagliarani-pascucci-2}
Lorig, M., S.~Pagliarani, and A.~Pascucci (2017).
\newblock Explicit implied volatilities for multifactor local-stochastic
  volatility models.
\newblock {\em Mathematical Finance\/}~{\em 27\/}(3), 926--960.

\bibitem[\protect\citeauthoryear{Merton}{Merton}{1969}]{merton1969lifetime}
Merton, R.~C. (1969).
\newblock Lifetime portfolio selection under uncertainty: The continuous-time
  case.
\newblock {\em The Review of Economics and Statistics\/}, 247--257.

\bibitem[\protect\citeauthoryear{Pagliarani and Pascucci}{Pagliarani and
  Pascucci}{2012}]{pagliarani2012analytical}
Pagliarani, S. and A.~Pascucci (2012).
\newblock Analytical approximation of the transition density in a local
  volatility model.
\newblock {\em Central European Journal of Mathematics\/}~{\em 10\/}(1),
  250--270.

\bibitem[\protect\citeauthoryear{Pham}{Pham}{2009}]{pham2009continuous}
Pham, H. (2009).
\newblock {\em Continuous-time stochastic control and optimization with
  financial applications}, Volume~61.
\newblock Springer Science \& Business Media.

\bibitem[\protect\citeauthoryear{Sharpe}{Sharpe}{1966}]{sharpe1966mutual}
Sharpe, W.~F. (1966).
\newblock Mutual fund performance.
\newblock {\em The Journal of Business\/}~{\em 39\/}(1), 119--138.

\end{thebibliography}

\appendix

\section{Proofs}
\label{sec:proofs}
\begin{proof}[Proof of Proposition \ref{prop:approx psi}]
By applying Duhamel's principle in \eqref{eq:solve psi0}, we see that the zeroth order term $\psi_0$ is given as 
\eqstar{
\psi_0(t) =  - (\tfrac{1}{2}\lam^2)_0 (T-t) - \gamma \nu  p^{BS}(t),
}
where
\eqstar{
p^{BS}(t):=  \ee^x \Phi\left( \frac{1}{\sigma_0 \sqrt{T-t}} \bigl(x - k + \tfrac{1}{2}\sigma_0^2 (T-t) \bigr) \right) - \ee^k \Phi\left( \frac{1}{\sigma_0 \sqrt{T-t}} \bigl(x - k - \tfrac{1}{2}\sigma_0^2 (T-t) \bigr) \right).
}
$p^{BS}$ denotes the price of a European call option in a Black-Scholes model with volatility coefficient $\sigma_0.$ For the first order term, we have 
\eqstar{
0 = (\partial_t  + \Act_0) \psi_1 + \Act_1 \psi_0 -  (\tfrac{1}{2}\lam^2)_1, & &\psi_1(T,x,y;\nu)  = 0.
}
Again by applying Duhamel's principle, the results in Lemma \ref{lemma:lorig} and \eqref{eq:poly f}, we get 
\eqstar{
\psi_1(t) &= \int^T_t \dd t_1 \Pct_0(t,t_1)\left( \Act_1 \psi_0(t_1) -  (\tfrac{1}{2}\lam^2)_1\right)\\
&= \int^T_t \dd t_1 \Pct_0(t,t_1)\left( \Act_1 \Bigl(  -  (\tfrac{1}{2}\lam^2)_0 (T-t_1) - \gamma \nu  p^{BS}(t_1) \Bigr) -  (\tfrac{1}{2}\lam^2)_1\right)\\
&= -\int^T_t \dd t_1 \Pct_0(t,t_1) \Act_1\gamma \nu  p^{BS}(t_1) - \int^T_t \dd t_1 (\tfrac{1}{2}\lam^2)_1 \bigl( \Xct(t,t_1),\Yct(t,t_1) \bigr)\\
&=-\gamma \nu \int^T_t \dd t_1 \Gct_1(t,t_1) \Pct_0(t,t_1) p^{BS}(t_1) - \int^T_t \dd t_1 (\tfrac{1}{2}\lam^2)_1 \bigl( \Xct(t,t_1),\Yct(t,t_1) \bigr)\\
&= -\gamma \nu \int^T_t \dd t_1 \Gct_1(t,t_1) p^{BS}(t) - \int^T_t \dd t_1 (\tfrac{1}{2}\lam^2)_1 \bigl( \Xct(t,t_1),\Yct(t,t_1) \bigr).
}
In the above, we have also used the semi-group property of $\Pct$. For the second order term, we have 
\eqstar{
0 = (\partial_t  + \Act_0) \psi_2 +\Act_2 \psi_0 + \Act_1 \psi_1 -  (\tfrac{1}{2}\lam^2)_2 + (1-\rho^2) (\tfrac{1}{2}\beta^2)_0 (\partial_y \psi_1)^2, & &\psi_2(T,x,y;\nu)  = 0.
}
Once more, by Duhamel's principle, we obtain 
\eqlnostar{eq:main term}{
\psi_2(t) &= \int^T_t  \dd t_1\Pct_0(t,t_1)\Bigl(\Act_2 \psi_0(t_1) + \Act_1 \psi_1(t_1)\Bigr)\\
&+ \int^T_t \dd t_1\Pct_0(t,t_1) \Bigl(  -  (\tfrac{1}{2}\lam^2)_2 + (1-\rho^2) (\tfrac{1}{2}\beta^2)_0 (\partial_y \psi_1(t_1))^2\Bigr).
}
In the above we note the following while using the results in Lemma \ref{lemma:lorig} and the semi-group property of $\Pct_0:$
\eqstar{
\int^T_t  \dd t_1\Pct_0(t,t_1) \Act_2 \psi_0(t_1)  &= \int^T_t  \dd t_1\Pct_0(t,t_1) \Act_2 \Bigl(-  (\tfrac{1}{2}\lam^2)_0 (T-t) - \gamma \nu  p^{BS}(t_1) \Bigr)\\
&= - \gamma \nu \int^T_t  \dd t_1\Pct_0(t,t_1) \Act_2 p^{BS}(t_1)\\
&= - \gamma \nu \int^T_t  \dd t_1 \Gct_2(t,t_1) \Pct_0(t,t_1) \Pct_0(t_1,T) \varphi\\
&= - \gamma \nu \int^T_t  \dd t_1 \Gct_2(t,t_1) \Pct_0(t,T) \varphi\\
&= - \gamma \nu \int^T_t  \dd t_1 \Gct_2(t,t_1) p^{BS}(t).
}
Next, we have 
\eqlnostar{eq:term A}{
\int^T_t  \dd t_1\Pct_0(t,t_1)\Act_1 \psi_1(t_1) &= \int^T_t  \dd t_1\Pct_0(t,t_1)\Act_1 \Bigl( -\gamma \nu \int^T_{t_1} \dd t_2 \Gct_1(t_1,t_2) p^{BS}(t_1) - \int^T_{t_1} \dd t_2 (\tfrac{1}{2}\lam^2)_1 \bigl( \Xct(t_1,t_2),\Yct(t_1,t_2) \bigr)\Bigr).
}
First term can be computed as 
\eqstar{
-\gamma \nu   \int^T_t  \dd t_1\Pct_0(t,t_1)\Act_1  \int^T_{t_1} \dd t_2 \Gct_1(t_1,t_2) p^{BS}(t_1) &=  -\gamma \nu   \int^T_t  \dd t_1 \Gct_1(t,t_1) \Pct_0(t,t_1) \int^T_{t_1} \dd t_2 \Gct_1(t_1,t_2) \Pct_0(t_1,T) \varphi\\
&= -\gamma \nu   \int^T_t  \dd t_1 \Gct_1(t,t_1) \int^T_{t_1} \dd t_2 \Gct_1(t_1,t_2)\Pct_0(t,t_1) \Pct_0(t_1,T) \varphi\\
&= -\gamma \nu   \int^T_t  \dd t_1 \Gct_1(t,t_1) \int^T_{t_1} \dd t_2 \Gct_1(t_1,t_2)p^{BS}(t).
}
The second term in \eqref{eq:term A} can be resolved as 
\eqstar{
-\int^T_t  \dd t_1\Pct_0(t,t_1)\Act_1\int^T_{t_1} \dd t_2 (\tfrac{1}{2}\lam^2)_1 \bigl( \Xct(t_1,t_2),\Yct(t_1,t_2) \bigr)&= -\int^T_t  \dd t_1 \Gct_1(t,t_1) \Pct_0(t,t_1) \int^T_{t_1} \dd t_2 \Pct_0(t_1,t_2)(\tfrac{1}{2}\lam^2)_1\\
&= -\int^T_t  \dd t_1 \Gct_1(t,t_1) \int^T_{t_1} \dd t_2 \Pct_0(t,t_2)(\tfrac{1}{2}\lam^2)_1\\
&= -\int^T_t  \dd t_1 \Gct_1(t,t_1) \int^T_{t_1} \dd t_2(\tfrac{1}{2}\lam^2)_1  \bigl( \Xct(t,t_2),\Yct(t,t_2) \bigr).
}
The first of the two remaining terms in \eqref{eq:main term} can be computed using the property in \eqref{eq:poly f}:
\eqstar{
- \int^T_t \dd t_1\Pct_0(t,t_1) (\tfrac{1}{2}\lam^2)_2 = - \int^T_t \dd t_1 (\tfrac{1}{2}\lam^2)_2 (\Xct(t,t_1),\Yct(t,t_1)).
}
The second term is computed as 
\eqlnostar{eq:main aux term}{
&\int^T_t \dd t_1\Pct_0(t,t_1) (1-\rho^2) (\tfrac{1}{2}\beta^2)_0 (\partial_y \psi_1(t_1))^2\\
&= (1-\rho^2) (\tfrac{1}{2}\beta^2)_0 \int^T_t \dd t_1\Pct_0(t,t_1) \Bigl[ \gamma^2 \nu^2 \bigl( \partial_y \zeta_1(t_1)\bigr)^2 +\bigl( \partial_y \eta(t_1)\bigr)^2  - 2 \gamma \nu \partial_y \zeta_1(t_1) \partial_y \eta(t_1)\Bigr]
}
where we have defined $\zeta(t)$ and $\eta(t)$ as 
\eqstar{
&\zeta(t) :=  \int^T_t \dd t_1 \Gct_1(t,t_1) p^{BS}(t),& &
\eta(t) := - \int^T_t \dd t_1 (\tfrac{1}{2}\lam^2)_1 \bigl( \Xct(t,t_1),\Yct(t,t_1) \bigr).&
}
By straightforward computations, we have that 
\eqstar{
\partial_y \eta(t_1) = - \int^T_{t_1} \dd t_2 (\tfrac{1}{2}\lam^2)_{0,1}  =  - (\tfrac{1}{2}\lam^2)_{0,1} (T-t_1).
}
Next, we use the property in \eqref{eq:prop commute} to obtain the following:
\eqstar{
\int^T_t \dd t_1\Pct_0(t,t_1)  \partial_y \zeta_1(t_1) \partial_y \eta(t_1) &=- \int^T_t \dd t_1(\tfrac{1}{2}\lam^2)_{0,1}  (T-t_1) \Pct_0(t,t_1)  \partial_y \zeta(t_1)\\
&= - \int^T_t \dd t_1(\tfrac{1}{2}\lam^2)_{0,1}  (T-t_1) \partial_y \bigl( \Pct_0(t,t_1)  \int^T_{t_1} \dd t_2 \Gct_1(t_1,t_2)  \Pct_0(t_1,t_2) p^{BS}(t_2) \bigr)\\
&=- \int^T_t \dd t_1(\tfrac{1}{2}\lam^2)_{0,1}  (T-t_1) \partial_y \bigl( \Pct_0(t,t_1)  \int^T_{t_1} \dd t_2 \Pct_0(t_1,t_2) \Act_1 p^{BS}(t_2) \bigr)\\
&=- \int^T_t \dd t_1(\tfrac{1}{2}\lam^2)_{0,1}  (T-t_1) \partial_y \Bigl( \int^T_{t_1} \dd t_2 \Gct_1(t,t_2)   p^{BS}(t)\Bigr).
}
To tackle the final term in \eqref{eq:main aux term}, we note the following 
\eqstar{
 \partial_y \zeta(t_1) =  \partial_y \Bigl( \int^T_{t_1} \dd t_2 \Gct_1(t_1,t_2) \Bigr) p^{BS}(t_1) = (T-t_1) (\tfrac{1}{2}\sigma^2)_{0,1} (\partial^2_x - \partial_x ) p^{BS}(t_1).
}
Using the explicit expression for $p^{BS}(t),$ we obtain
\eqstar{
(\partial^2_x - \partial_x ) p^{BS}(t_1) = \frac{1}{\sigma_0 \sqrt{T-t_1}} \ee^{x} \phi\bigl(d_+(t_1) \bigr),
}
where $\phi$ is the density function of standard normal distribution. Then, from the calculations in Appendix B of \cite{lorig-4}, we obtain the following 
\eqstar{
& \int^T_t \dd t_1\Pct_0(t,t_1)  \bigl(\partial_y \zeta(t_1) \bigr)^2\\
&=  \frac{(\tfrac{1}{2}\sigma^2)^2_{0,1}}{2 \pi \sigma^2_0} \int^T_t \dd t_1 \frac{(T-t_1)^{3/2}}{\sqrt{T-t+t_1 -t}} \exp \left(2 k -  \frac{\bigl( (k-x) + \tfrac{1}{2}\sigma^2_0(T-t) \bigr)^2 }{\sigma^2_0 (T-t+t_1-t)}\right).
}
\end{proof}

\begin{proof}[Proof of Proposition \ref{prop:approx p}]
The linear operator $\Ach_0$ is the infinitesimal generator of a diffusion in $\Rb^2$ whose drift vector and covariance matrix are constant. Then, by Duhamel's principle, the zeroth order term $p_0,$ which is independent of $y,$ is given as 
\eqstar{
p_0(t) =  p^{BS}(t),
}
where for standard normal cdf $\Phi,$ we have 
\eqstar{
p^{BS}(t):=  \ee^x \Phi\left( \frac{1}{\sigma_0 \sqrt{T-t}} \bigl(x - k + \tfrac{1}{2}\sigma_0^2 (T-t) \bigr) \right) - \ee^k \Phi\left( \frac{1}{\sigma_0 \sqrt{T-t}} \bigl(x - k - \tfrac{1}{2}\sigma_0^2 (T-t) \bigr) \right).
}
For the first order term, we have the following equation
\eqstar{
(\partial_t  + \Ach_0) p_1 + \Ach_1 p_0 = 0,& & p_1(T,x,y) = 0.
}
Once again, by using Duhamel's principle, we obtain that 
\eqstar{
p_1(t)  =  \int^T_t \dd t_1 \Pch_0(t,t_1) \Ach_1 p_0(t_1).
}
Then, by applying the result in Lemma \ref{lemma:lorig}, we get
\eqstar{
p_1(t) &=  \int^T_t \dd t_1 \Gch_1(t,t_1)  \Pch_0(t,t_1) p_0(t_1)\\
&=  \int^T_t \dd t_1 \Gch_1(t,t_1)  \Pch_0(t,t_1)  \Pch_0(t_1,T) \varphi\\
&= \int^T_t \dd t_1 \Gch_1(t,t_1) p_0(t) ,
}
Now, for the second order term, we have 
\eqstar{
(\partial_t  + \Ach_0) p_2 + \Ach_1 p_1 +  \Ach_2 p_0 = 0,& & p_2(T,x,y) = 0.
}
By applying Duhamel's principle, we get 
\eqstar{
p_2(t) =  \int^T_t \dd t_1 \Pch_0(t,t_1) \bigl( \Ach_1 p_1(t_1) +  \Ach_2 p_0(t_1)  \bigr).
}
From the result in Lemma \ref{lemma:lorig} and the result for $p_1$, we obtain that 
\eqstar{
p_2(t) &= \int^T_t \dd t_1 \Pch_0(t,t_1)  \Ach_1 \int^T_{t_1} \dd t_2  \Pch_0(t_1,t_2) \Ach_1 p_0(t_2)\\
&+ \int^T_t \dd t_1\Gch_2(t,t_1) \Pch_0(t,t_1) p_0(t_1) \\
&= \int^T_t \dd t_1\Gch_1(t,t_1) \Pch_0(t,t_1) \int^T_{t_1} \dd t_2  \Gch_1(t_1,t_2)\Pch_0(t_1,t_2)\Pch_0(t_2,T)  \varphi\\
&+ \int^T_t \dd t_1 \Gch_2(t,t_1) \Pch_0(t,t_1) \Pch_0(t_1,T) \varphi\\
&= \left( \int^T_t \dd t_1 \Gch_2(t,t_1) +  \int^T_t \dd t_1\int^T_{t_1} \dd t_2 \Gch_1(t,t_1)\Gch_1(t_1,t_2)\right) p_0(t).
} 
\end{proof}

\section{Approximation formulas in Section \ref{sec:examples}}
\label{sec:formulas Heston}
In the first order correction  \eqref{eq:imp sharpe first}, we get
\eqstar{
&\int^T_t \dd t_1 (\tfrac{1}{2} \lam^2)_1 (\Xc(t,t_1), \Yc(t,t_1) )\\
&= \int^T_t \dd t_1\Bigl[ (\tfrac{1}{2} \lam^2)_{1,0} \bigl(x-\bar{x} - (\tfrac{1}{2} \sig^2)_0 (t_1-t)\bigr)+ (\tfrac{1}{2} \lam^2)_{0,1} \bigl(y-\bar{y} + (c-\rho\beta\lambda)_0 (t_1-t)\bigr)\Bigr]\\
&= (\tfrac{1}{2} \lam^2)_{1,0} \bigl((x-\bar{x})(T-t) - (\tfrac{1}{2} \sig^2)_0 \frac{1}{2}(T-t)^2\bigr) + (\tfrac{1}{2} \lam^2)_{0,1} \bigl((y-\bar{y})(T-t) + (c-\rho \beta \lambda)_0 \frac{1}{2}(T-t)^2\bigr).
}
Similarly, in the second order correction \eqref{eq:imp sharpe second}, we compute the following term
\eqstar{
&\int^T_t  \dd t_1 \Gc_1(t,t_1) \int^T_{t_1} \dd t_2(\tfrac{1}{2}\lam^2)_1  \bigl( \Xc(t,t_2),\Yc(t,t_2) \bigr)\\
&=-(\tfrac{1}{2}\lam^2)_{1,0} \frac{1}{2}(T-t)^2 \Biggl(\Bigl( (\tfrac{1}{2}\sig^2)_{1,0}(x-\bar{x}) + (\tfrac{1}{2}\sig^2)_{0,1}(y-\bar{y}) \Bigr)\\
&\qquad + \frac{1}{3}(T-t) \Bigl(  -(\tfrac{1}{2}\sig^2)_{1,0}  (\tfrac{1}{2}\sig^2)_0 +  (\tfrac{1}{2}\sig^2)_{0,1} (c-\rho\beta \lambda)_0 \Bigr) \Biggr)\\
&+(\tfrac{1}{2}\lam^2)_{0,1} \frac{1}{2}(T-t)^2 \Biggl(\Bigl( (c-\rho\beta\lambda)_{1,0}(x-\bar{x}) +(c-\rho\beta\lambda)_{0,1}(y-\bar{y}) \Bigr)\\
&\qquad + \frac{1}{3}(T-t) \Bigl(-(c-\rho\beta\lambda)_{1,0}  (\tfrac{1}{2}\sig^2)_0 +  (c-\rho\beta\lambda)_{0,1} (c-\rho\beta \lambda)_0  \Bigr) \Biggr).
}
\begin{itemize}
\item For the Heston model, we get 
\eqstar{
&\int^T_t \dd t_1 (\tfrac{1}{2} \lam^2)_2 (\Xc(t,t_1), \Yc(t,t_1) )\\
&=(\tfrac{1}{2}\lam^2)_{1,1} \Bigl( (T-t) (x-\bar{x}) (y-\bar{y}) + \frac{(T-t)^2}{2}\bigl((x - \bar{x})(c-\rho\beta\lambda)_0-  (y - \bar{y})(\tfrac{1}{2}\sig^2)_0 \bigr) - \frac{(T-t)^3}{3} (c-\rho\beta\lambda)_0 (\tfrac{1}{2}\sig^2)_0 \Bigr).
}
\item For the reciprocal Heston model, we get
\eqstar{
&\int^T_t \dd t_1 (\tfrac{1}{2}\lam^2)_2 (\Xc(t,t_1),\Yc(t,t_1))\\
&= (\tfrac{1}{2}\lam^2)_{0,2} \Bigl[(y - \bar{y})^2(T-t) + (T-t)^2(y - \bar{y}) (c-\rho \beta \lambda)_0 + \bigl((c-\rho \beta \lambda)_0\bigr)^2\frac{1}{3}(T-t)^3 \Bigr].
}
\end{itemize}
Finally, we have
\eqstar{
&\int^T_t \dd t_1(\tfrac{1}{2}\lam^2)_{0,1} (T-t_1)  \partial_y \Bigl( \int^T_{t_1} \dd t_2 \Gc_1(t,t_2) \Bigr)  p^{BS}(t)\\
&= \int^T_t \dd t_1(\tfrac{1}{2}\lam^2)_{0,1}  (T-t_1)^2  (\tfrac{1}{2}\sig^2)_{0,1} (\partial^2_x - \partial_x) p^{BS}(t)\\
&= \int^T_t \dd t_1(\tfrac{1}{2}\lam^2)_{0,1}  (T-t_1)^2   (\tfrac{1}{2}\sig^2)_{0,1} \frac{1}{\sigma_0 \sqrt{T-t}} \ee^x \phi\bigl( d_+(t,x;\sigma_0))\\
&= (\tfrac{1}{2}\lam^2)_{0,1}  (\tfrac{1}{2}\sig^2)_{0,1} \frac{1}{\sigma_0 \sqrt{T-t}} \ee^x \phi\bigl( d_+(t,x;\sigma_0)) \frac{(T-t)^3}{3}.
}

\end{document}